\newtheorem{assumption}{Assumption}
\newtheorem{theorem}{Theorem}
\newtheorem{remark}{Remark}
\newtheorem{lemma}{Lemma}
\newtheorem{problem}{Problem}
\DeclareMathOperator{\DIAG}{diag}
\DeclareMathOperator{\VEC}{vec}
\DeclareMathOperator{\ROW}{row}
\DeclareMathOperator{\TR}{Tr}
\DeclareMathOperator{\Emp}{\mathcal{E}}
\DeclareMathOperator{\Binopdf}{binopdf}
\newcommand{\Ninf}[1]{ \| #1 \| }
\newcommand{\ID}[1]{ \mathbb{1} (#1 ) }
\newcommand{\Exp}[1]{\mathbb{E}\big[ #1\big]}
\newcommand{\Prob}[1]{\mathbb{P} \big(#1\big)}       
\newcommand{\argmin}{\operatornamewithlimits{argmin}}
\title{\LARGE \bf
Reinforcement Learning in Deep Structured Teams: Initial Results with Finite and Infinite Valued Features
}
\author{Jalal Arabneydi, Masoud Roudneshin and Amir G. Aghdam
\thanks{This work has been supported by the Natural Sciences and Engineering Research Council of Canada (NSERC) under Grant RGPIN-262127-17.}  
\thanks{Jalal Arabneydi, Masoud Roudneshin and Amir G. Aghdam are with the  Department of Electrical and Computer Engineering, 
        Concordia University, 1455 de Maisonneuve Blvd. West, Montreal, QC, Canada, Postal Code: H3G 1M8.  Email:  {\tt\small jalal.arabneydi@mail.mcgill.ca},    {\tt\small m\_roudne@encs.concordia.ca},   
        {\tt\small aghdam@ece.concordia.ca}}%
}
\begin{document}

\maketitle
\vspace*{-5cm}{\footnotesize{Proceedings of IEEE  Conference on Control Technology and Applications, 2020. Doi: 10.1109/CCTA41146.2020.9206397.}}
\vspace*{3.85cm}

\thispagestyle{empty}
\pagestyle{empty}
\begin{abstract}
In this paper, we consider Markov chain and linear quadratic  models for deep structured teams with  discounted  and time-average cost functions under two non-classical information structures, namely, deep state sharing and no sharing. In deep structured teams, agents are coupled in dynamics and cost functions through deep state, where deep state  refers to a set of orthogonal linear regressions of the states. In this article, we consider a homogeneous linear regression for Markov chain models (i.e.,  empirical distribution  of states)  and a few orthonormal linear regressions  for linear quadratic models (i.e., weighted average of states).  Some  planning algorithms   are developed for the case when the model is known, and some  reinforcement learning algorithms are proposed for the case when the model is not known completely. The convergence of  two model-free (reinforcement learning) algorithms, one for Markov chain models and one for linear quadratic models,  is  established.   The results   are then applied to a smart grid.
\end{abstract}

\section{Introduction}
Recently, there has been a surge of interest in the application of reinforcement learning algorithms in networked control systems such as social networks,  swarm robotics, smart grids and transportation networks. This type of  systems often consist of many interconnected agents (decision makers) that wish  to perform  a common task  with   limited resources in terms of computation, information and knowledge.  

When every agent has perfect information and complete knowledge of the entire network, the optimal solution can be computed by  dynamic programming decomposition. The computational complexity of  solving this dynamic program increases with the  number of agents (the so-called  ``curse of dimensionality'').  The above complexity is drastically exacerbated when the information is imperfect. For the case of decentralized information structure with finite spaces, on the other hand, the computational complexity of the resultant dynamic program  is NEXP~\cite{Bernstein2002complexity}, and  for infinite spaces with linear quadratic model, the optimization problem is non-convex~\cite{Witsenhausen1968Counterexample}. In addition,   the underlying network  model  is not always known completely; this lack of knowledge further  increases the above complexity.  Subsequently,  it is very difficult to solve a large-scale control problem with imperfect information of agents and  incomplete knowledge of the network. 

As an attempt to address the above  shortcomings,  we propose several  reinforcement learning algorithms for a class of multi-agent control problems  called \emph{deep structured teams}, introduced in~\cite{Jalal2019MFT,Jalal2019risk, Jalal2019Automatica,Jalal2020Nash,Vida2020CDC,Masoud2020CDC}, where  the interactions between the  decision makers  are modelled  by  a number of linear regressions (weighted averages) of states and actions, which is similar to  the interactions between the neurons of a feed-forward deep neural network.  In general, deep structured teams are decentralized control systems whose solutions are  amenable to the size of the problem. More precisely, the complexity of finding an optimal  solution of Markov chain deep structured team is polynomial (rather than exponential) with respect to the number of agents and is linear (rather than exponential) with respect to the control horizon~\cite{Jalal2019MFT}.  On the other hand,  the complexity of a linear quadratic deep structured team is independent of the number of agents~\cite{Jalal2019risk}.
 It is worth highlighting  that deep structured teams are the generalization of the notion of mean-field teams initially  introduced in~\cite{arabneydi2016new} and showcased in~\cite{JalalCDC2017,JalalCDC2018,
 Jalal2019LCSS,JalalCCECE2018,JalalACC2018,Jalal2017linear,JalalCDC2015,JalalACC2019}.   

The remainder of the paper is organized as follows. In Section~\ref{sec:formulation},  two models of  deep structured teams are formulated,  one with  finite state and action spaces and the other one with  infinite spaces. In Section~\ref{sec:planning}, different methods  are discussed for solving the planning problem for the case when the model is known. In Section~\ref{sec:RL}, some reinforcement learning methods are presented  for the case when the model is not known.  An example of a smart grid is provided in Section~\ref{sec:numerical} to verify the effectiveness of the proposed algorithms,  and the paper is then concluded  in Section~\ref{sec:conclusions}.

%

\section{Problem Formulation}\label{sec:formulation}

In this  paper, $\mathbb{N}$,  $\mathbb{R}$ and  $\mathbb{R}_{\geq 0}$  are  the sets of  natural numbers, real numbers and  non-negative real numbers, respectively.   For any $k, t \in \mathbb{N}$,  $\mathbb{N}_k$  denotes the finite set $\{1,\ldots,k\}$ and  $x_{1:t}$  denotes the vector $(x_1,\ldots,x_t)$.  For any vectors $x$, $y$, and $z$,  short-hand notation $\VEC(x, y, z)$  denotes the vector $[x^\intercal, y^\intercal, z^\intercal]^\intercal$.  For any matrices $A$, $B$  and $C$  that have the same number of columns, $\ROW(A, B,C)$ denotes  the  matrix $[A^\intercal, B^\intercal, C^\intercal]^\intercal$.
 Given any square matrices $A$, $B$ and $C$, $\DIAG(A, B, C)$  denotes the block diagonal  matrix with matrices $A$, $B$ and $C$ on its main diagonal.  In addition, $\Prob{\boldsymbol \cdot}$ is the probability of a random variable, $\Exp{\boldsymbol \cdot}$ is the expectation of an event, $\ID{\boldsymbol \cdot}$ is the indicator function of a set, $\TR(\boldsymbol \cdot)$ is the trace of a matrix, $\Ninf{\boldsymbol \cdot}$  is the infinity norm of a vector, and  $| \boldsymbol \cdot|$ is the absolute value of a real number or the cardinality of a set.    For any $n \in \mathbb{N}$,   $\Binopdf(n,p)$ denotes the binomial probability distribution of $n$ trials with success probability~$p \in [0,1]$.  For any finite  set $\mathcal{X}$,  the space of probability measures on $\mathcal{X}$ is denoted by: 
$\mathcal{P}(\mathcal{X})=\{(a_1,\ldots,a_{|\mathcal{X}|}) \big| a_i \in [0,1], i \in \mathbb{N}_{|\mathcal{X}|}, \sum_{i=1}^{|\mathcal{X}|} a_i=1\}$.
Furthermore,  the space of empirical distributions  over $\mathcal{X}$ with $n$ samples is given by: 
$ \Emp_n(\mathcal{X})=\{(a_1,\ldots,a_{|\mathcal{X}|}) \big| a_i \in \{0, \frac{1}{n},\ldots,1\}, i \in \mathbb{N}_{|\mathcal{X}|}, \sum_{i=1}^{|\mathcal{X}|} a_i=1\}$, 
where $\Emp_n (\mathcal{X}) \subset \mathcal{P}(\mathcal{X})$.

Consider a stochastic dynamic control system consisting of $n \in \mathbb{N}$ agents (decision makers). Let $x^i_t \in \mathcal{X}$, $u^i_t \in \mathcal{U}$ and $w^i_t \in \mathcal{W}$ denote the state, action and noise of agent $ i \in \mathbb{N}_n$ at time $t \in \mathbb{N}$. In addition, define  $\mathbf x_t=\VEC(x^1_t,\ldots,x^n_t) \in \mathcal{X}^n$, $\mathbf u_t=\VEC(u^1_t,\ldots,u^n_t) \in \mathcal{U}^n$, and $\mathbf w_t=\VEC(w^1_t,\ldots,w^n_t) \in \mathcal{W}^n$.   The initial states $\mathbf x_1$ and  noises  $\mathbf w_t$, $t \in \mathbb{N}$, are distributed randomly  with respect to  joint probability distribution functions $P_{\mathbf X}$ and  $P_{\mathbf W}$, respectively. It is assumed that random variables $\{\mathbf x_1, \mathbf{w}_1,\ldots,\mathbf w_T\}$, $T \in \mathbb{N}$, are  defined on a common probability space and are mutually   independent across any control horizon $T$.

 In this paper, we consider two fundamental models in deep structured teams, where the  first one has a finite space with a controlled  Markov chain formulation and the second one has an infinite space with  a linear quadratic structure.   
 
\subsection*{Model I: Finite-valued sets}
Let spaces $\mathcal{X}$, $\mathcal{U}$ and $\mathcal{W}$ be finite sets. Denote by $\mathfrak{D}_t \in \Emp_n(\mathcal{X} \times \mathcal{U})$ the empirical distribution of states and  actions  at time $t \in \mathbb{N}$, i.e.
\begin{equation}\label{eq:def_joint_distribution}
\mathfrak{D}_t(x,u):=\frac{1}{n}\sum_{i=1}^n \ID{x^i_t=x} \ID{u^i_t=u}, \quad \forall x \in \mathcal{X}, u \in \mathcal{U}. 
\end{equation}
Let $d_t$  denote  the  empirical distribution of states  at time $t $: 
\begin{equation}\label{eq:def_distribution of states}
d_t(x):= \sum_{u \in \mathcal{U}}\mathfrak{D}_t(x,u)= \frac{1}{n}\sum_{i=1}^n \ID{x^i_t=x}, \quad x \in \mathcal{X}.
\end{equation}
The agents are coupled in dynamics through $\mathfrak{D}_t$,  representing the  aggregate behaviour of agents at $t \in \mathbb{N}$.  More precisely, the state of agent $i \in \mathbb{N}$ at time $t \in \mathbb{N}$ evolves as follows:
\begin{equation}\label{eq:dynamics}
x^i_{t+1}=f(x^i_t, u^i_t, \mathfrak{D}_t,w^i_t),
\end{equation}
where $f: \mathcal{X} \times \mathcal{U} \times \Emp_n(\mathcal{X} \times \mathcal{U})  \times \mathcal{W} \rightarrow \mathcal{X}$, and $\{w^i_t\}_{t=1}^\infty$ is an i.i.d. random  process with probability mass function $P_W$, i.e., $P_{\mathbf W}(\mathbf w_t)=\prod_{i=1}^n P_W(w^i_t)$. Alternatively, the dynamics~\eqref{eq:dynamics} may be written in terms of transition probability matrix:
\begin{multline}
p(x^i_{t+1}, x^i_t, u^i_t, \mathfrak{D}_t):=\Prob{x^i_{t+1} \mid x^i_t,u^i_t, \mathfrak{D}_t}\\
=\sum_{w \in \mathcal{W}} \ID{x^i_{t+1}=f(x^i_t, u^i_t, \mathfrak{D}_t,w)}P_W(w),
\end{multline}
where $\sum_{y \in \mathcal{X}} p(y, x^i_t, u^i_t,\mathfrak{D}_t)=1$.  

 The agents are also coupled in cost function; this is formulated by defining the per-step cost function    $c(x^i_t,u^i_t,\mathfrak{D}_t)$ at  $t \in \mathbb{N}$, where    $c: \mathcal{X} \times \mathcal{U} \times \Emp_n(\mathcal{X} \times \mathcal{U}) \rightarrow \mathbb{R}_{\geq 0}$ and $i \in \mathbb{N}_n$. 
\subsection*{Model II: Infinite-valued sets}
Let  $\mathcal{X}=\mathbb{R}^{h_x}$, $\mathcal{U}=\mathbb{R}^{h_u}$ and $\mathcal{W}=\mathbb{R}^{h_x}$ be  finite-dimensional Euclidean spaces, where $h_x,h_u \in \mathbb{N}$.  Let also the weight (impact factor) $\alpha^{i,j} \in \mathbb{R}$ denote  the influence of agent $i \in \mathbb{N}_n$ on  the $j$-th feature, $j \in \mathbb{N}_z$, $z  \in \mathbb{N} $, where these  impact factors are assumed to be orthonormal  vectors  in the feature space, i.e., 
\begin{equation}
\frac{1}{n} \sum_{i =1}^n \alpha^{i,j} \alpha^{i,k}=\ID{j=k}, \quad j,k \in \mathbb{N}_z.
\end{equation}
For feature $j \in \mathbb{N}_z$, define the following linear regressions:
\begin{align}
\bar x^j_t:=\frac{1}{n}\sum_{i=1}^n \alpha^{i,j} x^i_t, \quad \bar u^j_t:=\frac{1}{n}\sum_{i=1}^n \alpha^{i,j} u^i_t.
\end{align}
 Define $\bar {\mathbf x}_t:=\VEC(\bar x^1_t,\ldots,\bar x^z_t)$ and  $\bar {\mathbf u}_t:=\VEC(\bar u^1_t,\ldots,\bar u^z_t)$. The dynamics of agent $i \in \mathbb{N}_n$ is coupled  with  other agents through $\bar{\mathbf x}_t$ and $\bar{\mathbf u}_t$ as follows:
\begin{equation}\label{eq:dynamics_LQ}
x^i_{t+1}=Ax^i_t+Bu^i_t+ \sum_{j=1}^z \alpha^{i,j}(\bar A^j \bar{\mathbf x}_t +\bar B^j \bar{\mathbf u}_t)+w^i_t,
\end{equation}
where matrices $A$, $B$, $\bar A^j$ and $\bar B^j$, $j \in \mathbb{N}_z$,  have appropriate dimensions.  To have a well-posed problem,  it is  assumed that the  initial states and  driving noises  have  uniformly bounded  covariance matrices with respect to time, and that the set of admissible
control actions are square
integrable for all agents.  For simplicity of presentation,  it is also assumed that the  initial states and  local noises have zero mean.    The per-step cost function of agent $i \in \mathbb{N}_n$ is defined as:
\begin{equation}\label{eq:cost_LQ}
c(x^i_t,u^i_t, \bar{\mathbf x}_t, \bar{\mathbf u}_t )=(x^i_t)^\intercal Q x^i_t + (u^i_t)^\intercal R u^i_t
+\bar{\mathbf x}_t^\intercal \bar Q \bar{\mathbf x}_t+ \bar{\mathbf u}_t^\intercal \bar{R} \bar{\mathbf u}_t,
\end{equation}
where   $Q$, $R$, $\bar{Q}$ and $\bar{R}$  are symmetric matrices with appropriate dimensions.   Let
\begin{align}
\bar{\mathbf A}&:=\DIAG(A,\ldots,A)+\ROW[\bar A^1,\ldots,\bar A^z],\\
\bar{\mathbf B}&:=\DIAG(B,\ldots,B)+\ROW[\bar B^1,\ldots,\bar B^z],\\
\bar{\mathbf Q}&:=\DIAG(Q,\ldots,Q)+\bar Q,\quad 
\bar{\mathbf R}:=\DIAG(R,\ldots,R)+\bar R.
\end{align}

\begin{remark}[Weakly coupled]\label{remark:equivariant}
\emph{Consider  a  special case where agents are coupled in  the dynamics~\eqref{eq:dynamics_LQ} and cost function~\eqref{eq:cost_LQ} through the following  terms:  
\begin{equation}
\sum_{j=1}^z \alpha^{i,j} (\bar A^j \bar x^j_t + \bar B^j \bar{u}^j_t), \quad \sum_{j=1}^z (\bar x^j_t)^\intercal \bar Q^j \bar x_t^j + (\bar u^j_t)^\intercal \bar R^j \bar u_t^j.
\end{equation}
The above case arises in various systems specially those with equivariant structure~\cite[Propositions 3 \& 4]{Jalal2019risk}.}
\end{remark}

\subsection{Information structure}\label{sec:info}
Following the terminology of deep structured  teams~\cite{Jalal2019MFT,Jalal2019risk}, we  refer to  the aggregate state of agents as  \emph{deep state},  which is  the empirical distribution $d_t$ in  Model I and weighted average $\bar{\mathbf x}_t$ in Model~II. The first information structure considered in this paper is called \emph{deep state sharing} (DSS).  Under this information structure,    the action of agent $i \in \mathbb{N}_n$ at time $t \in \mathbb{N}$ in Model~I  is chosen  with respect to  a probability mass function $g_t(\boldsymbol \cdot \mid x^i_t, d_t)  \in \mathcal{P}(\mathcal{U})$, i.e.,
\begin{equation}
u^i_t \sim g_t(\boldsymbol \cdot \mid x^i_t, d_t). \tag{Model I: DSS}
\end{equation} 
For Model II, however,  the action of agent $i \in \mathbb{N}_n$ at time $t \in \mathbb{N}$ is selected  with respect to a probability distribution function $g^i_t(\boldsymbol \cdot \mid x^i_t, \bar{\mathbf x}_t)  \in \mathcal{P}(\mathcal{U})$, i.e., 
\begin{equation}
u^i_t \sim g^i_t(\boldsymbol \cdot \mid x^i_t, \bar{\mathbf x}_t). \tag{Model II: DSS}
\end{equation} 

In practice, there are various methods  to  share the deep state among agents. For example, one may use a central distributor (such as a cloud-based server)  to collect the states, compute the deep state  and  broadcast it  among the agents. Alternatively,  one can utilize distributed techniques such as consensus-based algorithms  based on the    local interaction of each agent  with its neighbours.  However, it is sometimes  infeasible to share the deep state, specially  when the number of agents is very large. In such a case,  we consider  another  information structure  called  \emph{no sharing} (NS), where 
\begin{equation}
u^i_t \sim g_t(\boldsymbol \cdot \mid x^i_t) \in \mathcal{P}(\mathcal{U}), \tag{Model I: NS}
\end{equation}
and 
\begin{equation}
u^i_t \sim g^i_t(\boldsymbol \cdot \mid x^i_t) \in \mathcal{P}(\mathcal{U}). \tag{Model II: NS}
\end{equation}
It is to be noted that the strategy of agent $i \in \mathbb{N}_n$ in  Model~II  depends on  its index~$i$ as well as its local state $x^i_t$.
\subsection{Objective function}
 Let  $\mathbf g:=\{g_{t}\}_{t=1}^\infty$ denote  the  control strategy for the system.
Two performance indexes are considered, namely,  discounted cost and time-average cost.  In particular, given any discount factor  $\beta \in (0,1)$, the objective function of Model I is defined as follows:
\begin{equation}
J_{n,\beta}^{\text{(I)}}(\mathbf g):= (1-\beta) \mathbb{E}^{\mathbf g}[\frac{1}{n}\sum_{i=1}^n \sum_{t=1}^\infty \beta^{t-1} c(x^i_t,u^i_t, \mathfrak{D}_t)].
\end{equation}
Similarly,   the following  discounted cost function is defined  for Model~II:
\begin{equation}
J_{n,\beta}^{\text{(II)}}(\mathbf g):= (1-\beta) \mathbb{E}^{\mathbf g}[\frac{1}{n}\sum_{i=1}^n \sum_{t=1}^\infty \beta^{t-1} c(x^i_t,u^i_t,  \bar{\mathbf x}_t, \bar{\mathbf u}_t)].
\end{equation} 
In this article,  standard  mild assumptions are imposed on the model  to ensure that   the  total cost  is always bounded. As a result,    it is possible  to obtain the  time-average cost function as the limit of the discounted cost function. More precisely, the following holds for Model I:
\begin{align}
&J_{n,1}^{\text{(I)}}(\mathbf g):=\lim_{\beta \rightarrow 1} J^{\text{(I)}}_{n,\beta}(\mathbf g)\\
&= \lim_{\beta \rightarrow 1} \frac{ \lim_{T \rightarrow \infty}\Exp{\frac{1}{n}\sum_{i=1}^n \sum_{t=1}^T \beta^{t-1} c(x^i_t,u^i_t, \mathfrak{D}_t}}{\lim_{T \rightarrow \infty}\Exp{\sum_{t=1}^T \beta^{t-1}}}\\
&=\limsup_{T \rightarrow \infty} \frac{ \lim_{\beta \rightarrow 1}\Exp{\frac{1}{n}\sum_{i=1}^n \sum_{t=1}^T \beta^{t-1} c(x^i_t,u^i_t, \mathfrak{D}_t)}}{\lim_{\beta \rightarrow 1}\Exp{\sum_{t=1}^T \beta^{t-1}}}\\
 &=\limsup_{ T \rightarrow \infty}  \frac{1}{T} \mathbb{E}^{\mathbf g}[\frac{1}{n}\sum_{i=1}^n \sum_{t=1}^T  c(x^i_t,u^i_t, \mathfrak{D}_t)].
\end{align}
Analogously, one has the following for Model II:
\begin{align}
&J_{n,1}^{\text{(II)}}(\mathbf g):=\lim_{\beta \rightarrow 1} J^{\text{(II)}}_{n,\beta}(\mathbf g)\\
&= \lim_{\beta \rightarrow 1} \frac{ \lim_{T \rightarrow \infty}\Exp{\frac{1}{n}\sum_{i=1}^n \sum_{t=1}^T \beta^{t-1} c(x^i_t,u^i_t,  \bar{\mathbf x}_t, \bar{\mathbf u}_t)}}{\lim_{T \rightarrow \infty}\Exp{\sum_{t=1}^T \beta^{t-1}}}\\
&= \limsup_{T \rightarrow \infty}  \frac{1}{T} \mathbb{E}^{\mathbf g}[\frac{1}{n}\sum_{i=1}^n \sum_{t=1}^T   c(x^i_t,u^i_t,  \bar{\mathbf x}_t, \bar{\mathbf u}_t)].
\end{align}

It is to be noted that  there is no  general theory for infinite-space average cost functions.
\subsection{Problem statement}
Four problems are investigated.

\begin{problem}[Planning with DSS]\label{problem1}
Given dynamics $\{f, A, B, \bar{\mathbf A}, \bar{\mathbf B} \}$, cost function $\{c, Q, R, \bar{\mathbf Q}, \bar{\mathbf R}\}$, number of agents $n$,  probability distribution function of the initial states $P_{\mathbf X}$, probability distribution function of noises $P_{\mathbf W}$ and discount  factor $\beta \in (0,1]$,  find the optimal strategy $\mathbf g^\ast$ such that under DSS information structure for every strategy~$\mathbf g$:
\begin{equation}
J^{(\ell)}_{n,\beta}(\mathbf g^\ast) \leq J^{(\ell)}_{n,\beta}(\mathbf g), \quad  \ell \in \{\text{I}, \text{II}\}.
\end{equation}
\end{problem}

\begin{problem}[Planning with NS]\label{problem2}
Given dynamics $\{f, A, B, \bar{\mathbf A}, \bar{\mathbf B} \}$, cost function $\{c, Q, R, \bar{\mathbf Q}, \bar{\mathbf R}\}$, number of agents $n$,  probability distribution function of the initial states $P_{\mathbf X}$, probability distribution function of noises $P_{\mathbf W}$ and discount factor $\beta \in (0,1]$,  find a sub-optimal strategy $\hat{\mathbf g}$ such that under NS information structure for every strategy~$\mathbf g$:
\begin{equation}
J^{(\ell)}_{n,\beta}(\hat{\mathbf g}) \leq J^{(\ell)}_{n,\beta}(\mathbf g)+\varepsilon(n), \quad  \ell \in \{\text{I}, \text{II}\},
\end{equation}
where $\lim_{n \rightarrow \infty} \varepsilon(n)=0$.
\end{problem}

\begin{problem}[Reinforcement learning with DSS]\label{problem3}
Given state and action spaces $\{\mathcal{X}, \mathcal{U}\}$ and discount factor $\beta \in (0,1]$,  develop a reinforcement learning  algorithm whose performance under the learned strategy $\mathbf g_k$, $k \in \mathbb{N}$, converges to  that under the optimal strategy $\mathbf g^\ast$, as the number of iterations~$k$ increases.
\end{problem}

\begin{problem}[Reinforcement learning with NS]\label{problem4}
Given state and action spaces $\{\mathcal{X}, \mathcal{U}\}$ and discount  factor $\beta \in (0,1]$,  develop a reinforcement learning  algorithm whose performance under the learned strategy $\hat{\mathbf g}_k$, $k \in \mathbb{N}$, converges to that under the sub-optimal strategy $\hat{\mathbf g}$, as the number of iterations $k $ increases.
\end{problem}

\section{Main results for Problems~\ref{problem1} and~\ref{problem2}}\label{sec:planning}

Following~\cite{Jalal2019MFT}, we define a local control law $\gamma_t: \mathcal{X} \rightarrow \mathcal{P}(\mathcal{U})$  for Model I such that   under DSS information structure,
\begin{equation}\label{eq:DSS_gamma}
\gamma_t:=g_t(\boldsymbol \cdot, d_t), 
\end{equation}
and under NS information structure,
\begin{equation}\label{eq:NS_gamma}
\gamma_t:=g_t(\boldsymbol \cdot).
\end{equation}
From the definition of DSS and NS strategies given  in Subsection~\ref{sec:info} and  the change of variable introduced in~\eqref{eq:DSS_gamma} and~\eqref{eq:NS_gamma}, it follows that  the  action  of agent $i$ is  selected  randomly  with respect to the  probability mass function $\gamma_t(x^i_t)$, i.e.
\begin{equation}
u^i_t \sim \gamma_t(x^i_t).
\end{equation}
\begin{lemma}
Given any $n \in \mathbb{N}$,  $d_t \in \Emp_n(\mathcal{X})$ and $\gamma_t: \mathcal{X} \rightarrow \mathcal{P}(\mathcal{U})$ at time $t \in \mathbb{N}$, the following relations hold:
\begin{equation}
n \mathfrak{D}_t(x,u)  \sim \Binopdf(n d_t(x),\gamma_t(x)(u)), \quad x \in \mathcal{X},\hspace{.1cm} u \in \mathcal{U},
\end{equation}
and 
\begin{equation}
\Exp{\mathfrak{D}_t(x,u)}= d_t(x) \gamma_t(x)(u), \quad x \in \mathcal{X}, \hspace{.1cm} u \in \mathcal{U}.
\end{equation}
\end{lemma}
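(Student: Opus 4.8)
The plan is to fix a pair $(x,u) \in \mathcal{X} \times \mathcal{U}$ and analyze the count $n\mathfrak{D}_t(x,u) = \sum_{i=1}^n \ID{x^i_t=x}\ID{u^i_t=u}$ over the randomness in the action selections, treating $d_t$ and $\gamma_t$ as given. The first observation is that the factor $\ID{x^i_t=x}$ annihilates every agent whose state differs from $x$, so only the agents with state exactly $x$ contribute to the sum. By the definition of the empirical distribution in~\eqref{eq:def_distribution of states}, there are precisely $nd_t(x)$ such agents. I would therefore restrict the sum to this subset and write $n\mathfrak{D}_t(x,u) = \sum_{i:\, x^i_t = x} \ID{u^i_t = u}$.

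The second step is to identify each surviving summand as a Bernoulli trial. For any agent $i$ with $x^i_t = x$, the action is drawn as $u^i_t \sim \gamma_t(x^i_t) = \gamma_t(x)$, so $\ID{u^i_t = u}$ equals $1$ with probability $\gamma_t(x)(u)$ and $0$ otherwise. Because each agent randomizes its action using its own independent source of randomness, and $\gamma_t$ is a fixed local law once the deep state $d_t$ is given, these indicator variables are mutually independent across the $nd_t(x)$ agents sharing state $x$. Consequently $n\mathfrak{D}_t(x,u)$ is a sum of $nd_t(x)$ independent and identically distributed Bernoulli$(\gamma_t(x)(u))$ random variables, which is by definition distributed as $\Binopdf(nd_t(x), \gamma_t(x)(u))$. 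This establishes the first claim.

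The final step is immediate: the mean of a binomial random variable with $nd_t(x)$ trials and success probability $\gamma_t(x)(u)$ is $nd_t(x)\,\gamma_t(x)(u)$, so $\Exp{n\mathfrak{D}_t(x,u)} = nd_t(x)\,\gamma_t(x)(u)$, and dividing through by $n$ yields $\Exp{\mathfrak{D}_t(x,u)} = d_t(x)\,\gamma_t(x)(u)$, as claimed.

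I expect the only genuine subtlety to lie in the independence argument of the second step. One must make explicit the standing convention that, under a randomized strategy, each agent generates its action by an independent randomization, so that conditioning on the state profile (equivalently, on $d_t$) renders the action indicators independent Bernoulli variables drawn from the common law $\gamma_t(x)$. Everything else is bookkeeping with indicators together with the standard characterization of a binomial distribution as a sum of i.i.d.\ Bernoulli trials.
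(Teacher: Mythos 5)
Your proof is correct and follows essentially the same route as the paper: both reduce $n\mathfrak{D}_t(x,u)=\sum_{i=1}^n \ID{x^i_t=x}\ID{u^i_t=u}$ to a sum of $nd_t(x)$ independent Bernoulli$(\gamma_t(x)(u))$ indicators and invoke the binomial characterization. You simply spell out the independence convention and the expectation step more explicitly than the paper's one-line argument does.
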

\begin{proof}
The proof directly follows from~\eqref{eq:def_joint_distribution} such that 
\begin{equation}
n \mathfrak{D}_t(x,u)=\sum_{i=1}^n \ID{x^i_t=x} \ID{u^i_t=u},
\end{equation}
where the above equation consists of $n d_t(x)$ independent  binary random variables with success probability $\gamma_t(x)(u)$. 
\end{proof}

To ease the exposition of deep Chapman-Kolmogorov equation introduced in~\cite{Jalal2019MFT},   define
\begin{equation}
\mathcal{T}(x^i_{t+1},x^i_t,\gamma_t,d_t):=\sum_{u \in \mathcal{U}} \Prob{x^i_{t+1}\mid x^i_t, u, \gamma_t, d_t} \gamma_t(x^i_t)(u).
\end{equation}
Given any 
 $x,x' \in \mathcal{X}$, $\gamma: \mathcal{X} \rightarrow \mathcal{P}(\mathcal{U})$ and $d \in \Emp_n(\mathcal{X}) $, define the vector-valued function $\phi(x',x,\gamma, d) \in  \mathcal{P}\left(\{0,1,\ldots, n  d(x)\}\right)$ such that
\begin{multline}\label{eq:phi}
\phi(x',x,\gamma, d):=  \delta_0(n  d(x))  \\
 +\ID{d(x) > 0} \Binopdf{\left( n  d(x), \mathcal{T}(x',x,\gamma,d) \right)},
\end{multline} 
where $\delta_0(n d(x)) $ is the Dirac measure with the domain set $\{0,1,\ldots,$ $ n  d(x)\}$ and  a unit mass concentrated at zero.   In addition, let $\bar \phi(x',\gamma,d) \in  \mathcal{P}\left(\{0,1,\ldots, n\}\right)$ be the convolution function  of $\phi(x',x,\gamma, d)$ over all states $x \in \mathcal{X}=:\{s_1,\ldots,s_{|\mathcal{X}|}\}$, i.e.,
\begin{equation}\label{eq:tilde-phi}
\bar \phi(x',\gamma, d):= \phi(x',s_1,\gamma, d) \ast \ldots *\phi(x',s_{|\mathcal{X}|},\gamma, d),
\end{equation}
where $\bar \phi(x',\gamma, d)$ is a vector of size $n+1$. 
\begin{lemma}[Deep Chapman-Kolmogorov equation~\cite{Jalal2019MFT}]\label{thm:mean_field_iid}
 Given $d_t  $ and $\gamma_t$ at time $t \in \mathbb{N}$, the transition probability matrix of the  deep state can be computed  as follows:  for any $x' \in \mathcal{X}$ and $y \in \{0,1,\ldots,n\}$,
\begin{equation}
\Prob{d_{t+1}(x')=\frac{y}{n} \mid d_t,\gamma_t}= \bar \phi(x',\gamma_t, d_t)(y+1).
\end{equation}
\end{lemma}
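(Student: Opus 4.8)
The plan is to reduce the deep-state transition to a counting problem and then exploit conditional independence across agents. Fix $d_t$ and $\gamma_t$. By~\eqref{eq:def_distribution of states} we have $n d_{t+1}(x') = \sum_{i=1}^n \ID{x^i_{t+1}=x'}$, so computing the law of $d_{t+1}(x')$ amounts to finding the distribution of the number of agents whose successor state equals $x'$. First I would partition the $n$ agents according to their current state: for each $x\in\mathcal{X}$ there are exactly $n d_t(x)$ agents in state $x$, and I would write $N_x$ for the number of those that jump to $x'$, so that $n d_{t+1}(x') = \sum_{x\in\mathcal{X}} N_x$. The target is then to identify the law of each $N_x$ and to show the $N_x$ are independent, so that the law of their sum is a convolution.

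The key step is to show that, conditioned on $d_t$ and $\gamma_t$, each agent currently in state $x$ moves to $x'$ independently with the common probability $\mathcal{T}(x',x,\gamma_t,d_t)$. Under the change of variable $\gamma_t$, every agent draws its action from $\gamma_t(x^i_t)$ independently and its noise $w^i_t$ i.i.d., and an agent in state $x$ that plays action $u$ lands in $x'$ with kernel $\Prob{x'\mid x,u,\gamma_t,d_t}$; marginalizing the action gives the per-agent one-step probability $\sum_{u\in\mathcal{U}}\Prob{x'\mid x,u,\gamma_t,d_t}\,\gamma_t(x)(u)=\mathcal{T}(x',x,\gamma_t,d_t)$, which is the ``assign a type, then succeed'' collapse to a single Bernoulli parameter. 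The delicate point — and the part I expect to be the main obstacle — is that the raw one-step dynamics are coupled to the \emph{entire} population through $\mathfrak{D}_t$, so that a priori the successor states are correlated. I would dispatch this by conditioning on the realized $\mathfrak{D}_t$, whose conditional law given $d_t,\gamma_t$ is exactly the binomial/multinomial structure supplied by the preceding lemma, and then verifying that the population dependence of the per-agent kernel enters only through the conditioning variable $d_t$; once this is checked, averaging the action type out leaves the agents as genuinely independent trials with the fixed success probability $\mathcal{T}(x',x,\gamma_t,d_t)$. Establishing that the population coupling collapses to $d_t$ without leaving residual correlation is the heart of the argument.

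Granting this conditional independence, $N_x$ is a sum of $n d_t(x)$ i.i.d.\ Bernoulli trials, so $N_x\sim\Binopdf(n d_t(x),\mathcal{T}(x',x,\gamma_t,d_t))$ when $d_t(x)>0$, while $N_x$ is the point mass at $0$ when $d_t(x)=0$ (no agent occupies $x$); these are precisely the two cases encoded by $\phi(x',x,\gamma_t,d_t)$ in~\eqref{eq:phi}, with the $\delta_0$ term handling the empty-state case. Because agents occupying distinct current states use disjoint and independent randomness, the family $\{N_x\}_{x\in\mathcal{X}}$ is mutually independent, so the law of $n d_{t+1}(x')=\sum_x N_x$ is the convolution of the per-state laws, which is exactly $\bar\phi(x',\gamma_t,d_t)$ in~\eqref{eq:tilde-phi}. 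It then remains to reconcile indexing: $\bar\phi$ is a length-$(n+1)$ vector whose $(k+1)$-st entry is the probability that the count equals $k$, whence $\Prob{n d_{t+1}(x')=y\mid d_t,\gamma_t}=\bar\phi(x',\gamma_t,d_t)(y+1)$, and dividing the count by $n$ yields the stated identity. The empty-state convention and the off-by-one in the vector index are routine bookkeeping once the independence step is in hand.
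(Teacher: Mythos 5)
Your argument---decomposing $n\,d_{t+1}(x')$ via \eqref{eq:def_distribution of states} into per-current-state counts $N_x$, identifying each as $\Binopdf(n d_t(x),\mathcal{T}(x',x,\gamma_t,d_t))$ (or the point mass at zero for unoccupied states), and convolving the independent summands to land on $\bar\phi$ with the off-by-one index---is essentially the paper's own proof, which likewise rests on \eqref{eq:def_distribution of states} plus the sum-of-independent-variables/convolution fact and defers all remaining details to \cite[Theorem 3]{Jalal2019MFT}. The conditional-independence subtlety you flag (the coupling of the per-agent transition kernel to the whole population through $\mathfrak{D}_t$) is exactly the part the paper does not argue here but pushes into that citation, so your sketch is, if anything, more explicit about where the real work lies.
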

\begin{proof}
The proof follows from~\eqref{eq:def_distribution of states} and the fact that the probability distribution of the  sum of independent random variables can be described by the convolution of their individual  probability distributions. See~\cite[Theorem 3]{Jalal2019MFT} and~\cite[Theorem 1]{Jalal2019TSNE} for more details.
\end{proof}

 We now define a non-standard Bellman equation for Model~I such that for every $d_t \in \Emp_n(\mathcal{X})$ and $t \in \mathbb{N}$:
 \begin{multline}\label{eq:bellman_exact_n}
 V(d_t)=\min_{\gamma_t} (\bar c(d_t,\gamma_t) + \beta \Exp{V(d_{t+1}) \mid d_t,\gamma_t})  \\
 =\min_{\gamma_t} (\bar c(d_t,\gamma_t) + \beta  \sum_{\tilde d \in \Emp_n(\mathcal{X})} \Prob{d_{t+1}=\tilde d \mid d_t,\gamma_t} V(\tilde d)), 
 \end{multline}
 where
\begin{align}
\bar  c(d_t,\gamma_t)&:=\Exp{\frac{1}{n} \sum_{i=1}^n c(x^i_t,u^i_t, \mathfrak{D}_t) \mid d_t, \gamma_t}\\
 &\quad = \Exp{\sum_{x,u} c(x,u,\mathfrak{D}_t) \mathfrak{D}_t(x,u)   \mid d_t, \gamma_t}\\
 &\quad = \sum_{\mathfrak{D}} (\sum_{x,u} c(x,u,\mathfrak{D})\mathfrak{D}(x,u)) \Prob{\mathfrak{D}_t=\mathfrak{D} \mid d_t,\gamma_t}. 
\end{align}

\begin{theorem}\label{thm:theorem1}
Let $\psi^\ast (d)$, $d \in \Emp_n(\mathcal{X})$,  be a minimizer of the right-hand side of equation~\eqref{eq:bellman_exact_n}. The following strategy is an optimal solution for Problem~\ref{problem1} with  Model I:
\begin{equation}
u^{i,\ast}_t \sim g^\ast(x^i_t, d_t):= \psi^\ast(d_t)(x^i_t).
\end{equation}
\end{theorem}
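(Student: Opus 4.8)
The plan is to reduce the $n$-agent coupled problem to a single standard Markov decision process (MDP) whose state is the deep state $d_t$ and whose action is the local control law $\gamma_t$, and then invoke classical discounted-MDP theory to certify that the minimizer of the Bellman equation~\eqref{eq:bellman_exact_n} is optimal. First I would rewrite the objective entirely in terms of the deep-state trajectory and the local control laws. Using the change of variable~\eqref{eq:DSS_gamma} and the law of iterated expectations, I would condition the per-step average cost on $(d_t,\gamma_t)$; by the binomial characterization of $\mathfrak{D}_t$ established in the first lemma of this section (which shows the conditional law of $\mathfrak{D}_t$ given $(d_t,\gamma_t)$ depends on nothing else), the conditional expectation of $\frac{1}{n}\sum_{i=1}^n c(x^i_t,u^i_t,\mathfrak{D}_t)$ equals exactly $\bar c(d_t,\gamma_t)$. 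Hence
\begin{equation}
J_{n,\beta}^{\text{(I)}}(\mathbf g)=(1-\beta)\,\mathbb{E}^{\mathbf g}\Big[\sum_{t=1}^\infty \beta^{t-1}\bar c(d_t,\gamma_t)\Big].
\end{equation}

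The second step is to verify that $\{d_t\}$ is a controlled Markov chain driven by $\{\gamma_t\}$. This is precisely the content of the deep Chapman--Kolmogorov equation in Lemma~\ref{thm:mean_field_iid}: the transition kernel of $d_{t+1}$ depends on the past only through the current pair $(d_t,\gamma_t)$, via $\bar\phi$. Combined with the cost reformulation above, this shows that the DSS planning problem is equivalent to a discounted-cost MDP on the finite state space $\Emp_n(\mathcal{X})$ (finite because $n$ and $\mathcal{X}$ are finite), with action set the compact collection of local control laws $\gamma:\mathcal{X}\to\mathcal{P}(\mathcal{U})$, per-step cost $\bar c$, and transition kernel $\bar\phi$.

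Step three applies standard discounted-MDP theory. Since the state space is finite, the action set is compact, and $\bar c$ is continuous and bounded (as $c$ is a bounded nonnegative function on finite sets), a stationary optimal Markov policy exists and the optimal value is the unique bounded solution of the dynamic programming equation, which is exactly~\eqref{eq:bellman_exact_n}. Consequently, any stationary minimizer $\psi^\ast$ of its right-hand side yields an optimal Markov policy $\gamma_t=\psi^\ast(d_t)$; translating back through~\eqref{eq:DSS_gamma} gives the optimal DSS strategy $g^\ast(x^i_t,d_t)=\psi^\ast(d_t)(x^i_t)$. The time-average case $\beta=1$ then follows from the identification of $J_{n,1}^{\text{(I)}}$ with $\lim_{\beta\to1}J_{n,\beta}^{\text{(I)}}$ recorded when the objective was defined, together with the standard vanishing-discount argument on the finite MDP.

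The main obstacle is the first two steps taken together: certifying that the ostensibly high-dimensional, history-dependent, agent-coupled control problem collapses to a low-dimensional MDP in $d_t$ alone. The delicate point is that both the expected cost and the one-step transition must be shown to depend on the information structure only through $(d_t,\gamma_t)$ --- independently of agent identities and of the entire past --- which rests on the exchangeability induced by the symmetric DSS strategy and on the binomial and convolution structure supplied by the two lemmas above. A secondary subtlety is the optimality of symmetric stationary Markov policies against the broader class of admissible (possibly history-dependent or nonstationary) DSS strategies; this is resolved by the same reduction, since once the problem is a genuine MDP on $d_t$, the standard arguments guarantee that no such strategy can outperform the stationary solution of~\eqref{eq:bellman_exact_n}.
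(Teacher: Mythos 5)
Your proposal is correct and follows essentially the same route as the paper: the paper's proof is a one-line appeal to the fact that $d_t$ is an information state under the local law $\gamma_t$ (with details deferred to the cited reference), which is exactly the reduction you carry out explicitly via the binomial lemma for the cost and the deep Chapman--Kolmogorov equation for the transition kernel, followed by standard finite-state discounted-MDP theory. Your expanded treatment, including the remark on optimality over history-dependent strategies and the vanishing-discount handling of $\beta=1$, fills in precisely the steps the paper leaves implicit.
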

\begin{proof}
The proof follows from the fact that $d_t$ is an information state under  strategy $\gamma_t$. For more details, see~\cite[Theorem~2]{Jalal2019MFT}.
\end{proof}


To find the solution of Problem~\ref{problem1} with  Model II,  we impose the following standard assumption.

\begin{assumption}\label{ass: stability_known_LQ}
Let  matrices $Q$ and $\bar{\mathbf Q}$  be  positive semi-definite, and  matrices $R$ and $\bar{\mathbf R}$  be  positive definite. In addition,
let  $(A, B)$ and $( \bar{\mathbf A}, \bar{\mathbf{B}})$ be  stabilizable, and  $( A,Q^{1/2})$ and $( \bar{\mathbf A},  \bar{\mathbf Q}^{1/2})$ be detectable.
\end{assumption}

We now describe  the algebraic form of  the deep Riccati equation introduced in~\cite{Jalal2019risk,arabneydi2016new}:  
\begin{equation}\label{eq:deep_Riccati}
\begin{cases}
P= Q+\beta A^\intercal P A-\beta A^\intercal P B (B^\intercal P B+\beta^{-1}R)^{-1}B^\intercal PA,\\
\bar{\mathbf P}= \bar{\mathbf Q}+\beta \bar{\mathbf A}^\intercal \bar{\mathbf P} \bar{\mathbf A}-\beta \bar{\mathbf A}^\intercal \bar{\mathbf P} \bar{\mathbf B} (\bar{\mathbf B}^\intercal \bar{\mathbf P} \bar{\mathbf B}+\beta^{-1}\bar{\mathbf R})^{-1}\bar{\mathbf B}^\intercal \bar{\mathbf P} \bar{\mathbf A}.
\end{cases}
\end{equation}
Define  the following feedback gains:
\begin{equation}
\begin{cases}
\theta^\ast :=-(B^\intercal P B+\beta^{-1}R)^{-1}B^\intercal PA,\\
\bar{\boldsymbol \theta}^\ast:= -  (\bar{\mathbf B}^\intercal \bar{\mathbf P} \bar{\mathbf B}+\beta^{-1}\bar{\mathbf R})^{-1}\bar{\mathbf B}^\intercal \bar{\mathbf P} \bar{\mathbf A}.
\end{cases}
\end{equation}

\begin{remark}\label{remark:equivariant_Riccati}
\emph{For the weakly coupled case in Remark~\ref{remark:equivariant},   $\bar{\mathbf P}$ decomposes into  $z \in \mathbb{N}$ smaller Riccati equations such that for any $ j \in \mathbb{N}_z$:
\begin{multline}\label{eq:deep_Riccati_E}
\bar P^j= Q+\bar Q^j+\beta (A+\bar A^j)^\intercal \bar P^j (A+\bar A^j)\\
-\beta (A+\bar A^j)^\intercal \bar P^j (B+\bar B^j) ((B+\bar B^j)^\intercal \bar P^j (B+\bar B^j)+\beta^{-1}(R+\bar R^j))^{-1}\\
\times(B+\bar B^j)^\intercal \bar P^j (A+\bar A^j),
\end{multline}
where $\bar{\mathbf P}=\DIAG(\bar P^1,\ldots,\bar P^z)$.}
\end{remark}

\begin{remark}
\emph{Note that the dimension of the deep  Riccati equation~\eqref{eq:deep_Riccati} does not depend on the number of agents $n$; however,  it depends on the number of orthonormal features~$z$. }
\end{remark}

\begin{theorem}\label{thm:LQ_DSS}
Let Assumption~\ref{ass: stability_known_LQ} hold. The  optimal solution of Problem~\ref{problem1}  for Model II is  described by:
\begin{equation}\label{eq:optimal_LO}
u^{i,\ast}_t=\theta^\ast x^i_t+ \sum_{j=1}^z \alpha^{i,j} (\bar \theta^{j,\ast} \bar{\mathbf x}_t - \theta^{\ast} \bar x^j_t), \quad t \in \mathbb{N},
\end{equation}
where $\bar{\boldsymbol \theta}^\ast=:\ROW(\bar \theta^{1,\ast},\ldots,\bar \theta^{z,\ast})$.
For the special case of weakly coupled case in Remark~\ref{remark:equivariant}, one has:
\begin{equation}
u^{i,\ast}_t=\theta^\ast x^i_t+ \sum_{j=1}^z \alpha^{i,j} (\bar \theta^{j,\ast}  - \theta^{\ast}) \bar x^j_t.  
\end{equation}
\end{theorem}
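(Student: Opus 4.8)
The plan is to reduce the $n$-agent coupled problem to decoupled standard linear-quadratic regulators through a gauge change of variables built from the orthonormal features, and then invoke classical discounted LQR theory. For each agent I would introduce the local deviation variables $\tilde x^i_t := x^i_t - \sum_{j=1}^z \alpha^{i,j}\bar x^j_t$ and $\tilde u^i_t := u^i_t - \sum_{j=1}^z \alpha^{i,j}\bar u^j_t$. The elementary identity I would establish first is that, by orthonormality $\frac1n\sum_i \alpha^{i,j}\alpha^{i,k} = \ID{j=k}$, the feature-weighted average of every deviation vanishes, i.e. $\frac1n\sum_i \alpha^{i,k}\tilde x^i_t = 0$ and likewise for $\tilde u^i_t$. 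This orthogonality is the engine of the whole argument.

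First I would decompose the aggregate per-step cost. Writing $x^i_t = \tilde x^i_t + \sum_j \alpha^{i,j}\bar x^j_t$ and expanding $\frac1n\sum_i (x^i_t)^\intercal Q x^i_t$, the cross term is proportional to $\frac1n\sum_i \alpha^{i,j}\tilde x^i_t = 0$ and drops out, leaving $\frac1n\sum_i (\tilde x^i_t)^\intercal Q\tilde x^i_t + \sum_j (\bar x^j_t)^\intercal Q \bar x^j_t$; the same holds for the control term. Adding the explicit deep cost $\bar{\mathbf x}_t^\intercal \bar Q \bar{\mathbf x}_t + \bar{\mathbf u}_t^\intercal \bar R\bar{\mathbf u}_t$ and collecting the block-diagonal pieces reproduces exactly $\bar{\mathbf Q} = \DIAG(Q,\ldots,Q)+\bar Q$ and $\bar{\mathbf R} = \DIAG(R,\ldots,R)+\bar R$, so the running cost splits additively into a deep part $\bar{\mathbf x}_t^\intercal\bar{\mathbf Q}\bar{\mathbf x}_t + \bar{\mathbf u}_t^\intercal\bar{\mathbf R}\bar{\mathbf u}_t$ and a sum of decoupled local parts $(\tilde x^i_t)^\intercal Q\tilde x^i_t + (\tilde u^i_t)^\intercal R\tilde u^i_t$.

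Next I would show the dynamics decouple consistently. Taking the feature-weighted average of the state recursion~\eqref{eq:dynamics_LQ} and using orthonormality collapses the coupling sum to give $\bar x^k_{t+1} = A\bar x^k_t + B\bar u^k_t + \bar A^k\bar{\mathbf x}_t + \bar B^k\bar{\mathbf u}_t + \bar w^k_t$, which stacks to the deep system $\bar{\mathbf x}_{t+1} = \bar{\mathbf A}\bar{\mathbf x}_t + \bar{\mathbf B}\bar{\mathbf u}_t + \bar{\mathbf w}_t$. Subtracting $\sum_k\alpha^{i,k}$ times this relation from the individual recursion cancels the coupling terms entirely and yields the local deviation system $\tilde x^i_{t+1} = A\tilde x^i_t + B\tilde u^i_t + \tilde w^i_t$. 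Thus the problem becomes one decoupled discounted LQR on $(\bar{\mathbf A},\bar{\mathbf B},\bar{\mathbf Q},\bar{\mathbf R})$ together with $n$ decoupled discounted LQRs on $(A,B,Q,R)$, with a linear bijection between $\{u^i_t\}_{i=1}^n$ and $(\{\bar u^j_t\}_{j=1}^z,\{\tilde u^i_t\}_{i=1}^n)$ constrained by the orthogonality relations above.

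Finally, under Assumption~\ref{ass: stability_known_LQ} the two Riccati equations in~\eqref{eq:deep_Riccati} admit stabilizing positive semidefinite solutions $P$ and $\bar{\mathbf P}$, and classical discounted LQR gives the certainty-equivalent optimal feedbacks $\bar{\mathbf u}^\ast_t = \bar{\boldsymbol \theta}^\ast\bar{\mathbf x}_t$ and $\tilde u^{i,\ast}_t = \theta^\ast\tilde x^i_t$, the additive noise and its cross-correlations being irrelevant to the optimal gains. I would then verify that the consistency constraint is automatically satisfied, since $\frac1n\sum_i\alpha^{i,k}\tilde u^{i,\ast}_t = \theta^\ast \frac1n\sum_i\alpha^{i,k}\tilde x^i_t = 0$, so the unconstrained local optima are admissible deviations and no loss is incurred. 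Reconstructing $u^{i,\ast}_t = \tilde u^{i,\ast}_t + \sum_j\alpha^{i,j}\bar u^{j,\ast}_t = \theta^\ast x^i_t + \sum_j\alpha^{i,j}(\bar \theta^{j,\ast}\bar{\mathbf x}_t - \theta^\ast\bar x^j_t)$ gives the claimed law, and the weakly coupled specialization follows from Remark~\ref{remark:equivariant_Riccati}, since there $\bar\theta^{j,\ast}\bar{\mathbf x}_t$ reduces to $\bar\theta^{j,\ast}\bar x^j_t$. The main obstacle is the decoupling/bijection step: one must argue that minimizing the additively separated cost over the transformed variables is equivalent to the original centralized problem, and that the resulting feedbacks depend only on $(x^i_t,\bar{\mathbf x}_t)$, so that the centralized optimum is attained within the DSS class and is therefore optimal for Problem~\ref{problem1}.
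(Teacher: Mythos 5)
Your proposal is correct and follows essentially the same route as the paper: the deviation variables $\tilde x^i_t, \tilde u^i_t$ are exactly the gauge transformation $\Delta x^i_t, \Delta u^i_t$ used in the paper, and the argument (orthogonality kills the cross terms, the problem splits into $n$ identical local LQRs plus one deep LQR, certainty equivalence and the discounted Riccati equations under Assumption~\ref{ass: stability_known_LQ} give the gains, and reconstruction yields~\eqref{eq:optimal_LO}) is the paper's proof written out in more detail. The consistency check $\frac{1}{n}\sum_i \alpha^{i,k}\tilde u^{i,\ast}_t = 0$ is a worthwhile addition that the paper leaves implicit.
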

\begin{proof}
The proof follows from a change of variables and certainty equivalence principle. Define
\begin{equation}
\Delta x^i_t:=x^i_t - \sum_{j=1}^z \alpha^{i,j} \bar x^j_t, \quad \Delta u^i_t:=u^i_t - \sum_{j=1}^z \alpha^{i,j} \bar u^j_t. 
\end{equation}
The above change of variables is  a gauge transformation, initially introduced in~\cite{arabneydi2016new} and extended in~\cite{Jalal2019risk} to optimal control systems and in~\cite{Jalal2019Automatica} to dynamic games. From  the orthogonality induced by the above gauge transformation,   the dynamics~\eqref{eq:dynamics_LQ} and cost function~\eqref{eq:cost_LQ} can be decomposed  into $n$ identical linear quadratic problems with states and actions $(\Delta x^i_t, \Delta u^i_t)$, $i \in \mathbb{N}_n$, and one linear quadratic problem with  state and action $(\bar{\mathbf x}_t, \bar{\mathbf u}_t)$. The deep Riccati equation~\eqref{eq:deep_Riccati} gives the solution of the  above problems. For the special case of weakly coupled systems, the Riccati equation associated with  the state and action $(\bar{\mathbf x}_t, \bar{\mathbf u}_t)$ decomposes into $z$ smaller Riccati equations  given by~\eqref{eq:deep_Riccati_E}.    Note that  Assumption~\ref{ass: stability_known_LQ} implies that  for any $\beta \in (0,1]$,  $(\sqrt{\beta}A, \sqrt \beta B)$ and $(\sqrt \beta \bar{\mathbf A}, \sqrt \beta \bar{\mathbf{B}})$ are stabilizable, and  $(\sqrt \beta A,Q^{1/2})$ and $(\sqrt \beta \bar{\mathbf A},  \bar{\mathbf Q}^{1/2})$ are detectable. As a result, algebraic Riccati equation~\eqref{eq:deep_Riccati} has a unique, bounded and positive solution. See~\cite{Jalal2019risk,arabneydi2016new} for more details.
\end{proof}

\begin{remark}[Extended cost]
\emph{The main results of this paper naturally extend to cost functions with  post-decision states, i.e.,  $c(x^i_t,u^i_t, \mathfrak{D}_t, x^i_{t+1}, d_{t+1})$ and $c(x^i_t,u^i_t, \bar{\mathbf x}_t, \bar{\mathbf u}_t, x^i_{t+1},\bar{\mathbf x}_{t+1} )$. In addition, it is straightforward  to consider cross terms between the states and actions in~\eqref{eq:cost_LQ}.}
\end{remark}

\begin{remark}[Generalization]
\emph{The main results of this paper naturally generalize to state-dependent discount factor and  finite-horizon cost functions  as well as  multiple sub-populations with partially deep state sharing  information structure~\cite{Jalal2019risk} and  intermittent deep state sharing~\cite{Jalal2019LCSS}.}
\end{remark}

\subsection{NS information structure: mean-field approximation }
When deep state is not observed,  the above solutions are not practical.  To overcome this shortcoming, one can   approximate the deep state  by  mean field approximation~\cite{parisi1988statistical},   where
the strong law of large numbers provides a simple  asymptotic estimate.

Let $\mathfrak{M}_t \in \mathcal{P}(\mathcal{X} \times \mathcal{U})$ and $m_t \in \mathcal{P}(\mathcal{X})$  denote   the mean-field approximations of  $\mathfrak{D}_t \in \Emp_n(\mathcal{X} \times \mathcal{U})$ and $d_t \in \Emp_n(\mathcal{X})$, respectively, i.e.,
\begin{align}
\mathfrak{M}_t(x,u)&=\lim_{n \rightarrow \infty} \mathfrak{D}_t(x,u)=\lim_{n \rightarrow \infty}\frac{1}{n} \sum_{i=1}^n \ID{x^i_t=x} \ID{u^i_t=u}, \nonumber \\
m_t(x)&=\lim_{n \rightarrow \infty} d_t(x)=\lim_{n \rightarrow \infty}\frac{1}{n} \sum_{i=1}^n \ID{x^i_t=x}.
\end{align}
 We now define a non-standard Bellman equation for Model~I such that for every $m_t \in \mathcal{P}(\mathcal{X})$ and $t \in \mathbb{N}$:
 \begin{align}\label{eq:bellman_exact_inf}
 \hat V(m_t)&=\min_{\gamma_t} ( \hat c(m_t,\gamma_t) + \beta \Exp{\hat V(m_{t+1}) \mid m_t,\gamma_t}) \nonumber  \\
 &=\min_{\gamma_t} (\hat c(m_t,\gamma_t) + \beta   \hat V(\hat f(m_t,\gamma_t))), 
 \end{align}
 where
\begin{align}
\mathfrak{M}_t(x,u)&=m_t(x) \gamma_t(x)(u), \quad x \in \mathcal{X}, u \in \mathcal{U},\\
\hat  c(m_t,\gamma_t)&:=  \sum_{x,u} c(x,u,\mathfrak{M}_t)m_t(x) \gamma_t(x)(u),
\end{align}
and for every $x' \in \mathcal{X}$,
\begin{align}\label{eq:mean_field_dynamics}
m_{t+1}(x')&=\sum_{x \in \mathcal{X}} \sum_{u \in \mathcal{U}} m_t(x) \gamma_t(x)(u) \Prob{x'|x,u,\mathfrak{M}} \nonumber\\
&=:\hat f(m_t, \gamma_t)(x').
\end{align}

\begin{assumption}\label{ass:iid_x}
The initial states $\mathbf x_1$ are i.i.d. random variables with probability mass function $P_X$.
\end{assumption}

\begin{assumption}\label{ass:Lipschitz_finite}
For any $x,x' \in \mathcal{X}$, $u \in \mathcal{U}$ and $\mathfrak{M}_1, \mathfrak{M}_2 \in \mathcal{P}(\mathcal{X} \times \mathcal{U})$, there exist positive real constants $H^p$ and $H^c$ (that do not depend on $n$) such that
\begin{align}
|\Prob{x'|x,u,\mathfrak{M}_1} - \Prob{x'|x,u,\mathfrak{M}_2}| \leq H^p \Ninf{\mathfrak{M}_1 - \mathfrak{M}_2},\\
|c(x,u,\mathfrak{M}_1)-c(x,u,\mathfrak{M}_2) |\leq H^c \Ninf{\mathfrak{M}_1 - \mathfrak{M}_2}.
\end{align}
\end{assumption}
The above assumption is mild because any polynomial function of $\mathfrak{M} \in \mathcal{P}(\mathcal{X}\times \mathcal{U})$ is  Lipschitz on $\mathfrak{M}$ since  $\mathcal{P}(\mathcal{X}\times \mathcal{U})$ is a confined space.
\begin{theorem}\label{thm:theorem 3}
Let Assumptions~\ref{ass:iid_x} and~\ref{ass:Lipschitz_finite} hold. Let  also $\hat \psi (m)$, $m \in \mathcal{P}(\mathcal{X})$,  be a minimizer of the right-hand side of equation~\eqref{eq:bellman_exact_inf}. The following strategy is a solution for Problem~\ref{problem2} with Model I:
\begin{equation}
\hat u^{i}_t \sim \hat g(x^i_t, m_t):= \hat \psi(m_t)(x^i_t),
\end{equation}
where $m_{t+1}=\hat f(m_t,\gamma_t)$ with  $m_1=P_X$.
\end{theorem}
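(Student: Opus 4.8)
The plan is to certify $\hat{\mathbf g}$ as the \emph{mean-field optimal} no-sharing strategy and then to control the gap between the finite-population cost it incurs and the finite-population cost of any rival no-sharing strategy by a term $\varepsilon(n)\to0$. First I would check that~\eqref{eq:bellman_exact_inf} is a genuine discounted Markov decision process on the compact state space $\mathcal{P}(\mathcal{X})$: passing $n\to\infty$ in Lemma~\ref{thm:mean_field_iid} collapses the stochastic deep-state transition to the deterministic map $\hat f$ of~\eqref{eq:mean_field_dynamics}, and Assumption~\ref{ass:Lipschitz_finite} renders $\hat f$ and $\hat c$ continuous while finiteness of $\mathcal{X},\mathcal{U}$ makes $c$ bounded. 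Standard contraction arguments for $\beta\in(0,1)$ then supply a bounded $\hat V$ and a minimizer $\hat\psi$, whose induced time-varying local law $\gamma_t=\hat\psi(m_t)$ along the offline trajectory $m_{t+1}=\hat f(m_t,\gamma_t)$, $m_1=P_X$, is optimal for $\hat J_\beta(\gamma):=(1-\beta)\sum_{t}\beta^{t-1}\hat c(m_t,\gamma_t)$. Note that $\hat g(x^i_t,m_t)=\hat\psi(m_t)(x^i_t)$ is admissible under NS precisely because $m_t$ is precomputed from $P_X$ and the model and needs no observation of $d_t$.

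The core step is a uniform approximation lemma: for \emph{every} NS strategy $\gamma=\{\gamma_t\}$, $|J_{n,\beta}^{\text{(I)}}(\gamma)-\hat J_\beta(\gamma)|\le\varepsilon(n)$ with $\varepsilon(n)\to0$. Writing $e_t:=\Exp{\Ninf{d_t-m_t}}$, I would build a one-step recursion. Conditioned on the time-$t$ states, the binomial characterization of $n\mathfrak{D}_t$ (Lemma~1) shows $\mathfrak{D}_t$ concentrates around $d_t(x)\gamma_t(x)(u)$ at rate $O(1/\sqrt n)$; conditioned further on $\mathfrak{D}_t$, the transitions are independent, so $d_{t+1}$ concentrates around $\sum_{x,u}\mathfrak{D}_t(x,u)\Prob{x'\mid x,u,\mathfrak{D}_t}$ at the same rate. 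Comparing this conditional mean with $m_{t+1}=\hat f(m_t,\gamma_t)$ through the Lipschitz constant $H^p$ and the cardinalities $|\mathcal{X}|,|\mathcal{U}|$ yields $e_{t+1}\le L\,e_t+C/\sqrt n$, with $e_1=O(1/\sqrt n)$ by the i.i.d.\ initialization of Assumption~\ref{ass:iid_x}; unrolling gives $e_t\le C L^{t}/\sqrt n$.

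Converting distribution deviations into cost deviations via boundedness of $c$ and the Lipschitz constant $H^c$ bounds the per-step cost gap by a multiple of $e_t+1/\sqrt n$, and summing against $(1-\beta)\beta^{t-1}$ yields $\varepsilon(n)=O(1/\sqrt n)$. Since $\hat\psi$ minimizes the right-hand side of~\eqref{eq:bellman_exact_inf}, the induced strategy is mean-field optimal, $\hat J_\beta(\hat g)=\min_{\gamma}\hat J_\beta(\gamma)$, so for any NS strategy $\gamma$,
\[
J_{n,\beta}^{\text{(I)}}(\hat g)\le\hat J_\beta(\hat g)+\varepsilon(n)=\min_{\gamma'}\hat J_\beta(\gamma')+\varepsilon(n)\le\hat J_\beta(\gamma)+\varepsilon(n)\le J_{n,\beta}^{\text{(I)}}(\gamma)+2\varepsilon(n),
\]
which is precisely Problem~\ref{problem2} after renaming $2\varepsilon(n)$ as $\varepsilon(n)$.

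The hard part is the second step, and within it the final summation: the deviation $e_t$ grows geometrically like $L^t$, so the bound closes only when the discounting dominates, i.e.\ $\beta L<1$. For the time-average regime $\beta=1$ this fails unless an additional ergodicity or mixing hypothesis keeps $e_t$ uniformly bounded in $t$; I would therefore either restrict attention to $\beta\in(0,1)$ or read such a contraction property into the \textquotedblleft standard mild assumptions\textquotedblright\ invoked earlier.
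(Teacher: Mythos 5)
Your overall route---certify $\hat\psi$ as optimal for the deterministic mean-field MDP \eqref{eq:bellman_exact_inf}, then control $|J_{n,\beta}^{\text{(I)}}-\hat J_\beta|$ by a concentration-plus-Lipschitz recursion on $\Exp{\Ninf{d_t-m_t}}$---is exactly the argument the paper delegates to \cite[Theorem 4]{Jalal2019MFT} (``continuity and boundedness properties described in Assumption~\ref{ass:Lipschitz_finite} and the strong law of large numbers''), and you have filled in the quantitative skeleton correctly, including the $O(1/\sqrt n)$ rate and the role of Assumption~\ref{ass:iid_x} in initializing $e_1$. Your closing caveat about the recursion constant is also well taken: the error bound $e_t\le CL^t/\sqrt n$ does require $\beta L<1$ (or a separate uniform-in-time stability argument) to survive the discounted sum, and this is a genuine hypothesis that the paper's one-line proof sweeps under ``standard mild assumptions''; for $\beta=1$ the theorem as stated needs an ergodicity condition that neither you nor the paper supplies.

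The one substantive gap is the comparison class in your final sandwich. Problem~\ref{problem2} asks that the NS strategy $\hat{\mathbf g}$ be $\varepsilon(n)$-competitive against \emph{every} admissible strategy---in particular against DSS strategies, which is the whole point of calling $\hat{\mathbf g}$ ``sub-optimal'' relative to Theorem~\ref{thm:theorem1}. Your chain only establishes $\hat J_\beta(\gamma)\le J_{n,\beta}^{\text{(I)}}(\gamma)+\varepsilon(n)$ for strategies $\gamma$ whose local law is a deterministic (deep-state-independent) sequence, because your recursion tracks $d_t$ against a single precomputed trajectory $m_t$. For a DSS strategy $g_t(\boldsymbol\cdot\mid x, d_t)$ the induced local law is a random function of $d_t$, and the error recursion picks up a term $\Ninf{g_t(\boldsymbol\cdot\mid x,d_t)-g_t(\boldsymbol\cdot\mid x,m_t)}$ that cannot be bounded without assuming the rival strategy is itself Lipschitz in the deep state---which arbitrary strategies need not be. The standard fix, and the one used in the cited reference, is to compare \emph{value functions} rather than strategies: show that the Bellman operator of \eqref{eq:bellman_exact_n} converges to that of \eqref{eq:bellman_exact_inf} uniformly on $\Emp_n(\mathcal{X})\subset\mathcal{P}(\mathcal{X})$ (using the same binomial concentration and Assumption~\ref{ass:Lipschitz_finite}), deduce $\sup_{d}|V(d)-\hat V(d)|\le\varepsilon(n)$, and only then conclude that the greedy policy for $\hat V$ is $\varepsilon(n)$-optimal for the finite-$n$ problem. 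You should either add that value-function comparison or state explicitly that your $\varepsilon(n)$-optimality is only within the NS class.
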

\begin{proof}
The proof follows from the continuity and boundedness properties  described in Assumption~\ref{ass:Lipschitz_finite} and the strong law of large numbers. For more details, see~\cite[Theorem 4]{Jalal2019MFT}.
\end{proof}

For Model II, we make the following assumption.
\begin{assumption}\label{ass:LQ_bounded_n}
All matrices  in the agent dynamics~\eqref{eq:dynamics_LQ} and cost
function~\eqref{eq:cost_LQ}  as well as the covariance matrices  of the  initial states and  driving noises are uniformly bounded with respect to $n \in \mathbb{N}$.  In addition,  initial states and driving noises are independent random variables across agents at any time instant $t \in \mathbb{N}$.
\end{assumption}

For coupled dynamics, an additional  stability condition is required to ensure that the proposed  infinite-population strategy is stable under NS information structure. 
\begin{assumption}\label{ass:stability_hrwitz}
Matrix $\bar{\mathbf  A} + \bar{\mathbf B}\DIAG(\theta^\ast, \ldots,\theta^\ast)$ is Hurwitz.
\end{assumption}
It is to be noted that Assumption~\ref{ass:stability_hrwitz} holds  for decoupled dynamics, where $\bar A^j$ and $\bar B^j$ are zero, $\forall j \in \mathbb{N}_z$.  Define mean field $\mathbf m_{t}:=\VEC(m^1_t,\ldots,m^z_t)$ such that
\begin{equation}
\mathbf m_{t+1}=(\bar{\mathbf A}+ \bar{\mathbf B}\bar{\boldsymbol \theta}^\ast)  \mathbf m_t,
\end{equation} 
where  $ \mathbf m_1=\Exp{\bar{\mathbf x}_1}$.
\begin{theorem}\label{thm:LQ_NS_exact}
Let Assumptions~\ref{ass: stability_known_LQ},~\ref{ass:LQ_bounded_n} and~\ref{ass:stability_hrwitz} hold.
The following strategy is a solution  of Problem~\ref{problem2} with Model II
\begin{equation}
\hat u^{i}_t=\theta^\ast x^i_t+ \sum_{j=1}^z \alpha^{i,j} (\bar \theta^{j,\ast} \mathbf m_t - \theta^{\ast} m^j_t), \quad t \in \mathbb{N}.
\end{equation}
 Also, for the special case of weakly coupled  matrices, 
\begin{equation}
\hat u^{i}_t=\theta^\ast x^i_t+ \sum_{j=1}^z \alpha^{i,j} (\bar \theta^{j,\ast}  - \theta^{\ast})  m^j_t.  
\end{equation}
\end{theorem}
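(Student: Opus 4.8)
The plan is to combine the gauge-transformation decomposition used in the proof of Theorem~\ref{thm:LQ_DSS} with a law-of-large-numbers (mean-field) approximation of the unobserved deep state. First I would record the benchmark inequality: since every NS strategy is also an admissible DSS strategy that simply ignores the deep state, the NS admissible set is contained in the DSS admissible set, so the DSS-optimal strategy $\mathbf g^\ast$ of Theorem~\ref{thm:LQ_DSS} lower-bounds the cost of every NS strategy, i.e.\ $J^{\text{(II)}}_{n,\beta}(\mathbf g) \ge J^{\text{(II)}}_{n,\beta}(\mathbf g^\ast)$ for all NS $\mathbf g$. Hence it suffices to prove the one-sided bound $J^{\text{(II)}}_{n,\beta}(\hat{\mathbf g}) \le J^{\text{(II)}}_{n,\beta}(\mathbf g^\ast) + \varepsilon(n)$ with $\varepsilon(n)\to 0$; the only difference between $\hat{\mathbf g}$ and $\mathbf g^\ast$ is that $\hat{\mathbf g}$ replaces the random deep state $\bar{\mathbf x}_t$ by its deterministic mean-field surrogate $\mathbf m_t$.

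Next I would compute the closed-loop deep-state dynamics under $\hat{\mathbf g}$. Averaging the agent dynamics~\eqref{eq:dynamics_LQ} against $\alpha^{i,j}$ and using the orthonormality of the impact factors, the aggregate satisfies $\bar{\mathbf x}_{t+1} = \bar{\mathbf A}\bar{\mathbf x}_t + \bar{\mathbf B}\bar{\mathbf u}_t + \boldsymbol\eta_t$, where $\boldsymbol\eta_t$ collects the weighted noise averages $\tfrac1n\sum_i\alpha^{i,j}w^i_t$; by Assumption~\ref{ass:LQ_bounded_n} (independence across agents and uniformly bounded covariances) each component has zero mean and covariance of order $1/n$. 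Substituting $\hat u^i_t$ and again invoking orthonormality gives $\bar u^j_t = \theta^\ast(\bar x^j_t - m^j_t) + \bar\theta^{j,\ast}\mathbf m_t$. Writing $e_t := \bar{\mathbf x}_t - \mathbf m_t$ and using the mean-field recursion $\mathbf m_{t+1}=(\bar{\mathbf A}+\bar{\mathbf B}\bar{\boldsymbol\theta}^\ast)\mathbf m_t$, the mean-field cross terms cancel and I obtain the clean deviation recursion
\[
e_{t+1} = \big(\bar{\mathbf A}+\bar{\mathbf B}\,\DIAG(\theta^\ast,\ldots,\theta^\ast)\big) e_t + \boldsymbol\eta_t, \qquad e_1 = \bar{\mathbf x}_1 - \Exp{\bar{\mathbf x}_1}.
\]

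The matrix governing this recursion is exactly the one required to be Hurwitz in Assumption~\ref{ass:stability_hrwitz}, which is the crucial point. The initial deviation $e_1$ and each driving term $\boldsymbol\eta_t$ have covariance $O(1/n)$ by the law of large numbers, but over an infinite horizon these per-step errors could in principle accumulate. The Hurwitz property makes the deviation dynamics exponentially stable, so the propagated and injected errors form a convergent geometric series and $\sup_{t}\Exp{\Ninf{e_t}^2}=O(1/n)$ uniformly in $t$. For $\beta<1$ the discounting already tames the tail, but for the time-average case $\beta=1$ this uniform-in-time bound is indispensable.

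Finally I would translate the state-deviation bound into a cost bound. Under $\hat{\mathbf g}$ the per-agent signals differ from their DSS-optimal counterparts only through $e_t$, and since the per-step cost~\eqref{eq:cost_LQ} is quadratic its increments are controlled by products of the closed-loop signals with $e_t$; combining the uniform second-moment bounds on the states (guaranteed by Assumptions~\ref{ass: stability_known_LQ} and~\ref{ass:LQ_bounded_n}) with Cauchy--Schwarz yields a per-step cost gap of order $1/\sqrt n$, hence $\varepsilon(n)=O(1/\sqrt n)\to 0$. The weakly coupled case then follows by specialization: when $\bar{\boldsymbol\theta}^\ast=\DIAG(\bar\theta^{1,\ast},\ldots,\bar\theta^{z,\ast})$ as in Remark~\ref{remark:equivariant_Riccati}, the term $\bar\theta^{j,\ast}\mathbf m_t$ reduces to $\bar\theta^{j,\ast} m^j_t$, giving the stated simplified law. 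The main obstacle I anticipate is the interplay of the last two steps: one must handle the infinite (or time-averaged) horizon so that the $O(1/n)$ per-step deviations do not accumulate --- which is precisely where the Hurwitz assumption enters --- and, because the quadratic cost is not globally Lipschitz, the cost gap must be bounded through uniform second-moment estimates of the states rather than a naive Lipschitz argument.
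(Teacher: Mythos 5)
Your proposal is correct and follows essentially the same route as the paper's (very terse) proof: the gauge-transformation decomposition of Theorem~\ref{thm:LQ_DSS}, the strong law of large numbers giving an $O(1/n)$ deviation of the deep state from its mean-field surrogate, Assumption~\ref{ass:stability_hrwitz} to prevent that deviation from accumulating over the infinite/time-averaged horizon, and uniform second-moment bounds to convert the state deviation into a vanishing cost gap. The paper delegates these details to \cite{Jalal2019risk} and \cite{arabneydi2016new}; your write-up supplies them explicitly and correctly identifies the error recursion $e_{t+1}=\big(\bar{\mathbf A}+\bar{\mathbf B}\DIAG(\theta^\ast,\ldots,\theta^\ast)\big)e_t+\boldsymbol\eta_t$ as exactly the point where Assumption~\ref{ass:stability_hrwitz} enters.
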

\begin{proof}
The proof follows from Assumptions~\ref{ass:LQ_bounded_n} and~\ref{ass:stability_hrwitz}, Theorem~\ref{thm:LQ_DSS},  the strong law of large numbers, and the fact  that  the optimal strategy and cost function are  bounded and continuous with respect to $n$. For more details, see~\cite[Theorem 3]{Jalal2019risk} and~\cite[Theorem 3.8]{arabneydi2016new}.    
\end{proof} 
\begin{remark}
\emph{When the system matrices are independent of the number of agents $n$,  the solution of the deep Riccati equation is also  independent of  $n$  for  the risk-neutral cost minimization~\cite[Chapter 3]{arabneydi2016new} and minmax optimization~\cite{Jalal2019LCSS}. However,  this is a rather special case,  and more generally,  such as in  risk-sensitive cost minimization problem~\cite{Jalal2019risk} and linear quadratic game~\cite{Jalal2019Automatica}, the solution  depends on $n$.}
\end{remark} 

\subsection{Numerical solutions}
To  numerically solve the Bellman equations in Theorems~\ref{thm:theorem1} and~\ref{thm:theorem 3}, one can  quantize   the space of probability measures on the local state and local action (i.e. $\mathcal{P} (\mathcal{X})$ and   $\mathcal{P}(\mathcal{U})$), and  use standard methods such as value iteration, policy iteration and linear programming, according to~\cite{Bertsekas2012book}. 

\begin{remark} 
\emph{In practice,   value iteration,  policy iteration and linear programming suffer  from the curse of dimensionality when the size of state and action spaces is large, unless some special structures are imposed such as a linear quadratic model. In Section~\ref{sec:RL}, we discuss more practical approaches that provide  more efficient solutions at the cost of losing the optimality.  It is worth mentioning that the convergence in policy space is often   faster than that in value space.}
\end{remark}

For Theorems~\ref{thm:LQ_DSS} and~\ref{thm:LQ_NS_exact}, one can use different techniques   to  solve the algebraic deep  Riccati equation~\eqref{eq:deep_Riccati}. Some common approaches include   invariant subspaces such as Schur method,  Newton-type iteration and  Krylov subspaces for large-scale problems. See~\cite{Bini2011} for more advanced methods.

\subsection{Decentralized implementation}
\subsubsection*{Model I with DSS}
Every agent $i \in \mathbb{N}_n$ can independently  solve the  dynamic program~\eqref{eq:bellman_exact_n}  upon the observation of  deep state. Since the resultant optimization problem is the same for all agents, the agents commonly choose the control  law $\psi^\ast$.  Then,    agent $i$ chooses its action $u^i_t$  based on  its local (private) state $x^i_t$ and deep state $d_t$  with respect to the probability distribution function $\psi^\ast(d_t)(x^i_t)=\gamma^\ast_t(x^i_t)$ at any time $t \in \mathbb{N}$ (see Theorem~\ref{thm:theorem1} for more details).
\subsubsection*{Model I with NS}
 For NS information structure, deep state is replaced by mean field, and   every agent solves the dynamic program~\eqref{eq:bellman_exact_inf}  by predicting the mean field $m_{t+1}$, given  current value $m_t$ and local law $\gamma_t$, according to~\eqref{eq:mean_field_dynamics}. The resultant performance converges  to that of the optimal one  as the number of agents goes to infinity; (see Theorem~\ref{thm:theorem 3} for more details). 

\subsubsection*{Model II with DSS}
Every agent solves one Riccati equation of the  same order as  an individual agent and one Riccati equation  whose order  is  $z$ times  greater than  that of an individual agent given in~\eqref{eq:deep_Riccati}. For the weakly coupled  case,  every agent solves $z+1$ Riccati equations of  an individual agent's order (Remark~\ref{remark:equivariant_Riccati}). Then, each agent computes its action $u^i_t$ based on its impact factors $(\alpha^{i,j})_{j=1}^z$,  local state~$x^i_t$ and deep state $\bar{\mathbf x}_t$ (see Theorem~\ref{thm:LQ_DSS} for more details).

\subsubsection*{Model II with NS}
 For NS information structure,  deep state is replaced by mean field and the resultant solution  asymptotically converges to the optimal solution as the number of agents goes to infinity.

\section{Main results for Problems~\ref{problem3} and~\ref{problem4}}\label{sec:RL}
 In the previous section, it was  assumed that the model is  completely known. In this section, we provide various techniques to approximate the proposed dynamic programs for the case when the model is not known completely.  

In general, there are two fundamental approaches  to learning  the solution of the proposed dynamic programs. The first one is called \emph{model-based} approach  which uses  supervised learning techniques to find the parameters of the models described  in Section~\ref{sec:formulation}, and then solve the planning problems presented in Section~\ref{sec:planning}. In short, this approach is an indirect method that obtains the  solution  by  constructing a model. The second approach, however, finds the solution directly without  identifying the model. This approach  is called \emph{reinforcement learning} (RL)  (also called \emph{model-free} or \emph{approximate dynamic program}). 

The advantage of the model-based approaches is that they are  more intuitional  because  they not only provide a solution but also construct a model.  However, for large-scale problems, it is more efficient to use  reinforcement learning methods as they directly search for the solution. In general, there are  two types of RL algorithms: off-line and on-line. In the former type, the exploration step is not a major  concern as  all states and actions can be  visited sufficiently often, given a rich set of data and/or simulator. In the latter type,  however, it is critical  to explore the model in such a way that   all  states and actions are visited sufficiently often.

%

\subsection{Model I}
In what follows, we  briefly present the main idea behind approximate value iteration, approximate policy iteration and approximate  linear programming. For the approximate value iteration, consider a one-step ahead update of Bellman equation~\eqref{eq:bellman_exact_n} as follows:
\begin{equation}\label{eq:approximate_VI}
\min_{\gamma }  \bar c(d,\gamma)+ \beta  \sum_{\tilde d} \Prob{\tilde d \mid d, \gamma} \tilde V( \tilde d)
\end{equation} 
where   $\gamma: \mathcal{X} \rightarrow \mathcal{P}(\mathcal{U})$ can be approximated  as:
\begin{equation}
\gamma(x)(u) \approx \frac{\theta(x,u)}{\sum_{u \in \mathcal{U}} \theta(x,u)}, \quad \theta(x,u) >0, \hspace{.1cm} x \in \mathcal{X},  \hspace{.1cm} u \in \mathcal{U},
\end{equation}
 such that  $\sum_{u \in \mathcal{U}}\gamma(x)(u)=1$. In particular, one can use a  normalized exponential distribution  with the following form:
\begin{equation}
\gamma(x)(u)  \approx \frac{e^{-\theta(x,u)}}{\sum_{u \in \mathcal{U}} e^{-\theta(x,u)}}.
\end{equation}
  It is also possible to approximate the value function linearly using feature-based architecture as follows:
\begin{equation}
V(\tilde d) \approx \tilde V(\tilde d)=\sum_{\ell=1}^L \tilde r_\ell \phi_\ell(\tilde d).
\end{equation}
Moreover, one can use a multi-step ahead update in~\eqref{eq:approximate_VI}. Similar function approximations can be used in  policy iteration and linear programming. For more details, see~\cite{Bertsekas2012book,Sutton2018introduction}. 

We present Algorithm~\ref{alg.RL-Qlearning}, a (model-free) Q-learning algorithm, wherein attention is restricted to  deterministic  strategies, i.e.,  $\gamma: \mathcal{X} \rightarrow \mathcal{U}$    for Model I.   Let $\mathcal{G}$ denote the set of mappings from the local state space  $\mathcal{X}$ to the  local action space $\mathcal{U}$. It is  also possible to  use  various function approximations to provide a more practical algorithm, albeit at the cost of  reduced  performance.

\alglanguage{pseudocode}
\begin{algorithm}[t!]
\small
\caption{Proposed Q-Learning Procedure}
\label{alg.RL-Qlearning}
\begin{algorithmic}[1] 
\State Set  $Q_1(d,\gamma)=0$ and $\eta_1(d,\gamma)=1, \forall d \in \Emp_n(\mathcal{X}),  \hspace{.1cm}\forall \gamma \in \mathcal{G}$.

\State At  iteration $k \in \mathbb{N},$ given any deep state $d \in \Emp_n(\mathcal{X}) $ and any local law $\gamma  \in \mathcal{G}$,  update   the  corresponding Q-function and  learning rate as follows:
\begin{align}
\begin{cases}
Q_{k+1}(d,\gamma)=(1-\eta_k(d, \gamma)) Q_k (d, \gamma)\\
\quad + \eta_k (d, \gamma) (\bar c'+\beta \min_{\gamma' \in \mathcal{G}} Q_k(d',\gamma')),\\
\eta_{k+1}(d,\gamma)=\lambda(k,\eta_k(d,\gamma)),
\end{cases}
\end{align}
where $\bar c'$ is the immediate cost, $d'$ is the next deep state, and $\lambda$  determines proper learning rates  $\eta_k \in [0,1], k \in \mathbb{N},$  such that
$\sum_{k=1}^\infty \eta_k(d,\gamma) = \infty$ and  $\sum_{k=1}^\infty (\eta_k(d,\gamma))^2 <\infty.$
  
  \State \noindent Set $k= k+1$, and go to step 2 until the algorithm terminates.
\Statex
\end{algorithmic}
  \vspace{-0.2cm}%
\end{algorithm}

\begin{theorem}\label{thm:RL}
Suppose that every pair of deep state and local law $(d,\gamma) \in \Emp_n(\mathcal{X}) \times \mathcal{G}$  is visited infinitely often  in Algorithm~\ref{alg.RL-Qlearning}. Then,  the following results hold:
\begin{itemize}
\item[ (a)]  For any $(d,\gamma) \in \Emp_n(\mathcal{X}) \times \mathcal{G}$, the Q-function $Q_k(d,\gamma)$  converges  to  $Q^\ast(d,\gamma)$ with probability one, as $k \rightarrow \infty$.
\item[ (b)] Let $g_k(\boldsymbol \cdot, d) \in \argmin_{\gamma \in \mathcal{G}}Q_k(d,\gamma)$ be a greedy strategy; then,   the performance of $g_k$ converges  to that of the optimal strategy $g^\ast$ given in Theorem~\ref{thm:theorem1}, when attention is restricted to deterministic  strategies.
\end{itemize}
\end{theorem}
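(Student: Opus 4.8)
The plan is to recognize Algorithm~\ref{alg.RL-Qlearning} as classical $Q$-learning applied to a \emph{finite} discounted Markov decision process (MDP) whose ``state'' is the deep state $d \in \Emp_n(\mathcal{X})$ and whose ``action'' is the local law $\gamma \in \mathcal{G}$, and then to invoke off-the-shelf stochastic-approximation theory. The conceptual crux, which Lemma~\ref{thm:mean_field_iid} and Theorem~\ref{thm:theorem1} already supply, is that $(d_t)_t$ is a \emph{controlled Markov chain}: by the deep Chapman--Kolmogorov equation the conditional law of $d_{t+1}$ given the history depends only on the current pair $(d_t,\gamma_t)$, through the transition kernel $\Prob{d_{t+1}=\tilde d \mid d_t,\gamma_t}=\bar\phi(\cdot,\gamma_t,d_t)$, and the induced per-step cost $\bar c(d_t,\gamma_t)$ likewise depends only on $(d_t,\gamma_t)$. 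Hence the Bellman equation~\eqref{eq:bellman_exact_n} is the optimality equation of a genuine MDP, and its optimal value $V$ together with $Q^\ast(d,\gamma):=\bar c(d,\gamma)+\beta\sum_{\tilde d}\Prob{\tilde d\mid d,\gamma}V(\tilde d)$, $V(\tilde d)=\min_{\gamma'}Q^\ast(\tilde d,\gamma')$, is the well-defined fixed point of the $Q$-Bellman operator.

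Having set this up, I would verify the finiteness and boundedness the classical theorems require. Since $\mathcal{X}$ and $\mathcal{U}$ are finite, the state set $\Emp_n(\mathcal{X})$ is finite and the deterministic-law set $\mathcal{G}$ has cardinality $|\mathcal{U}|^{|\mathcal{X}|}<\infty$; the cost $c$ is bounded on these finite spaces, so $\bar c$ is bounded, and $\beta\in(0,1)$. The recursion in Algorithm~\ref{alg.RL-Qlearning} is then exactly the asynchronous $Q$-learning iterate for this MDP, the learning rates obey the Robbins--Monro conditions $\sum_k\eta_k(d,\gamma)=\infty$ and $\sum_k\eta_k(d,\gamma)^2<\infty$, and by hypothesis every pair $(d,\gamma)$ is updated infinitely often.

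For part~(a) these are precisely the hypotheses of the standard asynchronous $Q$-learning convergence theorem (Watkins--Dayan; Tsitsiklis; Jaakkola--Jordan--Singh): the $Q$-Bellman operator is a $\beta$-contraction in the sup-norm with unique fixed point $Q^\ast$, and the stochastic-approximation iterate with martingale-difference noise converges almost surely to that fixed point. I would therefore conclude $Q_k(d,\gamma)\to Q^\ast(d,\gamma)$ with probability one for every $(d,\gamma)$; as the space is finite, this pointwise convergence is uniform, i.e.\ $\Ninf{Q_k-Q^\ast}\to 0$ almost surely. For part~(b) I would invoke the standard near-greedy performance bound: letting $g_k$ be greedy with respect to $Q_k$ and $V^{g_k}$ its cost-to-go, a routine contraction argument gives that $\Ninf{Q_k-Q^\ast}\le\varepsilon_k$ implies $\Ninf{V^{g_k}-V}\le \tfrac{2\varepsilon_k}{1-\beta}$, with $V$ the optimal value of~\eqref{eq:bellman_exact_n} attained by $g^\ast$ over deterministic local laws. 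Since $\varepsilon_k\to0$ almost surely by part~(a), the cost incurred by $g_k$ converges to that of $g^\ast$; ties in the $\argmin$ are harmless because every greedy selection satisfies the same bound.

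The main obstacle is not the stochastic-approximation machinery, which is standard, but justifying the reduction in the first paragraph: one must check that treating the \emph{local law} $\gamma$ rather than a primitive action as the MDP action preserves both the Markov property and the additive structure of the cost, so that $Q^\ast$ genuinely encodes the value of Problem~\ref{problem3} restricted to deterministic strategies. This is exactly the information-state property of $d_t$ furnished by Theorem~\ref{thm:theorem1}, so the argument closes. For the boundary case $\beta=1$ the sup-norm contraction is lost, and one would instead appeal to the stochastic-shortest-path (proper-policy) version of the convergence theorem.
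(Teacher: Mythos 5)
Your proposal is correct and follows essentially the same route as the paper: it reduces the problem to a finite MDP over pairs $(d,\gamma)$ via the information-state property of the deep state (Theorem~\ref{thm:theorem1} and Lemma~\ref{thm:mean_field_iid}), and then invokes the standard asynchronous Q-learning convergence result of Tsitsiklis, relying on the sup-norm contraction of the Bellman operator, with the usual greedy-policy performance bound giving part~(b). The paper's own proof is just a condensed citation of exactly this argument, so your write-up simply supplies the details it leaves implicit.
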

\begin{proof}
The proof follows from the proof of  convergence of the Q-learning algorithm and Theorem~\ref{thm:theorem1}, which exploits  the fact that the Bellman  operator is   contractive with respect to the infinity norm. See \cite[Theorem 4]{Tsitsiklis1994asychronous}  for more details on the convergence proof of the Q-learning algorithm.
\end{proof}
Similar to Theorem~\ref{thm:RL}, one can use  a quantized space with quantization level $1/q$, $ q\in \mathbb{N}$, similar to the one proposed in~\cite[Theorem 6]{Jalal2019MFT},  to develop an approximate Q-learning algorithm  under NS information structure. The performance of the learned strategy converges to that of Theorem~\ref{thm:RL} as the number of agents $n$ and quantization parameter $q$ increase.
\begin{remark}
\emph{The above results can be extended to stochastic shortest path  problem where  $\beta=1$ under the condition that  there exists a special cost-free terminal state that absorbs all states  under any strategy~\cite{bertsekas1991analysis}.}
\end{remark}
\subsection{Model II}
For Model II, we use a model-free  policy-gradient method proposed in~\cite{fazel2018global}, and present Algorithm~\ref{alg.RL-LQ}.  Given a smoothing parameter $r>0$, let $P_r$ denote the  uniform probability distribution over the matrices of size $h_u \times h_x$, whose Frobenius norm is $r$. Similarly, let $\mathbf{P}_r$ denote the  uniform probability distribution over the matrices of size $zh_u \times zh_x$, whose Frobenius norm is $r$.


\alglanguage{pseudocode}
\begin{algorithm}[t!]
\small
\caption{Proposed Policy Gradient Procedure}
\label{alg.RL-LQ}
\begin{algorithmic}[1] 
\State  Initialize the number of agents $n$, number of trajectories $L$, control  horizon $T$, number of features $z$, feedback gains $(\theta_1, \bar{\boldsymbol \theta}_1)$, smoothing parameter $r$, and step size $\eta$. 

\State At  iteration $k \in \mathbb{N}$, run the following steps: 
\begin{itemize}
\item[] for $\ell=1:L$ 
\begin{itemize}
\item initialize states $\mathbf x_1=\VEC(x^1_1,\ldots,x^n_1)$;
\item given any agent $i \in \mathbb{N}_n$, use strategy~\eqref{eq:optimal_LO} with perturbed feedback gains: $\theta_k +\tilde \theta$ and $\bar{\boldsymbol \theta}_k + \tilde{\boldsymbol \theta}$, where $\tilde \theta \sim P_r$ and $\tilde {\boldsymbol \theta} \sim  \mathbf P_r$;
\item compute the cost  trajectories $c^\ell_{1:T}=\frac{1}{n} \sum_{i=1}^n c^\ell_{1:T}(i)$;
\end{itemize}
\item[] end
\item[] compute  $\nabla \bar  C \hspace{-.1cm}= \hspace{-.1cm}\frac{z^2h_xh_u}{TL r^2} \sum_{\ell=1}^L \sum_{t=1}^T \beta^{t-1} c^\ell_{t} \tilde{\boldsymbol \theta}$ and $\nabla \hat C=\frac{h_xh_u}{TL r^2} \sum_{\ell=1}^L \sum_{t=1}^T \beta^{t-1}  c^\ell_{t} \tilde \theta$;
\end{itemize}

  \State Update feedback gains: $\theta_{k+1}=\theta_k -\eta \nabla  \hat C_k$ and $\bar{\boldsymbol \theta}_{k+1}=\bar{\boldsymbol \theta}_k -{\eta} \nabla \bar C_k $.
  \State \noindent Let $k= k+1$, and go to step 2 until  the algorithm terminates.
\Statex
\end{algorithmic}
  \vspace{-0.2cm}%
\end{algorithm}

\begin{assumption}\label{ass:invertible}
The covariance matrices of initial states and driving noises are positive definite (i.e., they are invertible).
\end{assumption}
\begin{theorem}\label{thm:RL_LQ}
Let Assumptions~\ref{ass: stability_known_LQ},~\ref{ass:LQ_bounded_n} and~\ref{ass:invertible} hold. The performance of the  learned strategy $\{\theta_k,\bar{\boldsymbol \theta}_k\}$, given by Algorithm~\ref{alg.RL-LQ}, converges to that of the optimal strategy $\{\theta^\ast,\bar{\boldsymbol \theta}^\ast\}$ in Theorem~\ref{thm:LQ_DSS} with probability one, as $k \rightarrow \infty$.
\end{theorem}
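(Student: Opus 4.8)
The plan is to reduce the $n$-agent deep structured linear quadratic problem to two \emph{decoupled} standard LQR problems and then invoke the global convergence guarantee of the zeroth-order (derivative-free) policy gradient method of~\cite{fazel2018global} on each of them. Concretely, I would first apply the gauge transformation $\Delta x^i_t = x^i_t - \sum_{j=1}^z \alpha^{i,j}\bar x^j_t$ and $\Delta u^i_t = u^i_t - \sum_{j=1}^z \alpha^{i,j}\bar u^j_t$ from the proof of Theorem~\ref{thm:LQ_DSS}. By the induced orthogonality, the dynamics~\eqref{eq:dynamics_LQ} and cost~\eqref{eq:cost_LQ} split into $n$ identical local subsystems with data $(A,B,Q,R)$ governed by the gain $\theta$, and one aggregate (deep) subsystem with data $(\bar{\mathbf A},\bar{\mathbf B},\bar{\mathbf Q},\bar{\mathbf R})$ governed by the gain $\bar{\boldsymbol\theta}$. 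The discount factor is absorbed exactly as in Theorem~\ref{thm:LQ_DSS} by the rescaling $A\mapsto\sqrt\beta A$, $B\mapsto\sqrt\beta B$ (and likewise for the barred matrices), so that each subproblem becomes an undiscounted LQR instance of the form treated in~\cite{fazel2018global}.

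Next I would show that the two gradient estimates formed in Algorithm~\ref{alg.RL-LQ} are precisely the zeroth-order LQR gradient estimators of the two decoupled problems. Because the cost is additively separable across the $\Delta$-subsystems and the $\bar{\mathbf x}$-subsystem, perturbing $\theta$ by $\tilde\theta\sim P_r$ affects only the local cost while perturbing $\bar{\boldsymbol\theta}$ by $\tilde{\boldsymbol\theta}\sim\mathbf P_r$ affects only the deep cost. Hence $\nabla\hat C = \frac{h_xh_u}{TLr^2}\sum_{\ell,t}\beta^{t-1}c^\ell_t\,\tilde\theta$ and $\nabla\bar C = \frac{z^2h_xh_u}{TLr^2}\sum_{\ell,t}\beta^{t-1}c^\ell_t\,\tilde{\boldsymbol\theta}$ are, in expectation over the sphere of radius $r$, smoothed approximations of the true LQR gradients with respect to $\theta$ and $\bar{\boldsymbol\theta}$ respectively, the normalizing factors $h_xh_u$ and $z^2h_xh_u$ being exactly the dimensions of the two gain matrices. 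The finite horizon $T$ and finite trajectory count $L$ (and finite $n$) introduce controllable truncation and sampling errors that vanish in the appropriate limits under Assumption~\ref{ass:LQ_bounded_n}.

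With the reduction in place, I would verify the three structural hypotheses required by~\cite{fazel2018global} and conclude. First, Assumption~\ref{ass: stability_known_LQ} guarantees that the deep Riccati equation~\eqref{eq:deep_Riccati} admits unique, bounded, positive solutions, so each subproblem has a finite optimal cost attained at a unique stabilizing gain, and it supplies the coercivity and smoothness of the LQR cost on its stabilizing sublevel sets. Second, Assumption~\ref{ass:invertible} --- positive definiteness of the initial-state and noise covariances --- yields the \emph{gradient-dominance} (Polyak--{\L}ojasiewicz) inequality of~\cite{fazel2018global}, which is the key property that converts first-order stationarity into global optimality despite the non-convexity of the LQR landscape. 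Third, Assumption~\ref{ass:LQ_bounded_n} keeps all data uniformly bounded in $n$, so the constants in the convergence rate are uniform. Applying the main theorem of~\cite{fazel2018global} separately to the $\theta$- and $\bar{\boldsymbol\theta}$-iterations then gives $\theta_k\to\theta^\ast$ and $\bar{\boldsymbol\theta}_k\to\bar{\boldsymbol\theta}^\ast$ with probability one, and by Theorem~\ref{thm:LQ_DSS} the corresponding strategy~\eqref{eq:optimal_LO} converges to the optimal one.

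I expect the main obstacle to be the simultaneous control of the three error sources --- the smoothing bias set by $r$, the estimation variance set by $L$ and $n$, and the horizon-truncation error set by $T$ --- together with the requirement that every iterate remain inside the open, non-convex set of stabilizing feedback gains, which is the domain on which the cost is finite. Keeping the iterates stabilizing is delicate because the cost blows up at the boundary of this set; the gradient-dominance property, combined with a sufficiently small step size $\eta$ and smoothing radius $r$, is precisely what prevents the iterates from leaving the relevant sublevel set, and this is the step I would treat most carefully, largely by importing the corresponding quantitative estimates from~\cite{fazel2018global}.
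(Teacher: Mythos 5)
Your proposal follows essentially the same route as the paper's own (very terse) proof: decompose the problem via the gauge transformation of Theorem~\ref{thm:LQ_DSS} into decoupled local and deep LQR subproblems, and then invoke the global convergence guarantee for zeroth-order policy gradient from~\cite[Theorem 9]{fazel2018global} on each, with Assumption~\ref{ass:invertible} supplying the gradient-dominance condition. Your version is considerably more detailed about verifying the hypotheses and about the error sources ($r$, $L$, $T$, stabilizing iterates), but the underlying argument is the one the paper intends.
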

\begin{proof}
The proof follows from~\cite[Theorem 9]{fazel2018global} and the decomposition proposed in~Theorem~\ref{thm:LQ_DSS}.
\end{proof}
Analogous to Theorem~\ref{thm:RL_LQ}, one can devise an approximate policy gradient algorithm under NS information structure, where  deep state is approximated by mean field. Note that Theorem~\ref{thm:RL_LQ} holds for arbitrary probability distribution (i.e., not necessarily  Gaussian and/or i.i.d. across agents).

\section{Numerical example}\label{sec:numerical}
The power grid is a complex   large-scale  network consisting of many decision makers   such as  users and service providers. Due to some fundamental  challenges such as  global warming,  limited fossil fuel and intermittent nature  of  renewable energy sources,  there is an inevitable need  for smart grid wherein the decision makers   intelligently interact with each other  and use  limited resources efficiently. As a result,  there has   been a  growing  interest  recently in network management of smart grid~\cite{fang2012smart}. In what follows, we provide a simple example showcasing the application of  our results in learning  the optimal resource allocation strategy.

\textbf{Example 1.} Consider a smart grid  with  $n \in \mathbb{N}$ users.  Let  $x^i_t \in \mathbb{R}$ denote the consumed  energy of  user $i \in \mathbb{N}_n$   at time $t \in \mathbb{N}$ and $\bar x_t$ denote the  weighted average of the total  energy consumption of  users, i.e., 
\begin{equation}
 \bar x_t= \frac{1}{n} \sum_{i=1}^n \alpha^i x^i_t,
\end{equation}
where $\alpha^i$ indicates the relative  importance (priority)  of  user  $i$ compared to others.  The linearized dynamics of  user $i$ is:
\begin{equation}
x^i_{t+1}=x^i_t + u^i_t + w^i_t,
\end{equation}
where $ w^i_t$  reflects the uncertainty in  energy consumption of user $i$ at time $t$.   The objective is to find a resource allocation strategy such that   the following cost is minimized:
\begin{equation}
\lim_{T \rightarrow \infty} \frac{1}{T} \Exp{\frac{1}{n}\sum_{t=1}^T   \sum_{i=1}^n (x^i_t)^\intercal Q x^i_t + (u^i_t)^\intercal R u^i_t +  \bar x_t^\intercal \bar Q \bar x_t },
\end{equation}
where the first two terms are the operational   cost of each user  and   the third term is  the  cost  associated with purchasing energy from a utility.

Suppose  that the information structure is deep state sharing, and let all users run Algorithm~\ref{alg.RL-LQ} as their energy management strategy.  Consider the following numerical parameters:
\begin{align}
&n=10, \quad A=1,  \quad B=1,\quad \bar Q=4, \quad R=1,
 \quad  Q=1,\\
 &\bar R=1, \quad r=0.15, \quad  \eta=0.3, \quad T=10, \quad L=100,\\
 &\beta=1,   \quad \alpha^{1:6}= \sqrt{0.5}, \quad \alpha^{7}=\sqrt{1.5}, 
  \quad \alpha^{8}= 1,\alpha^{9}=\sqrt{2},\\
  &  \alpha^{10}=\sqrt{2.5}, \quad w^i_t \sim \text{norm}(0,0.02), \quad x^i_1 \sim  \text{unif}(0,0.1),
\end{align}
where the  initial states and local noises are assumed to be i.i.d. random variables. 
\begin{figure}[t!]
\centering
\hspace{0cm}
\includegraphics[trim={0 8.4cm 0 8.5cm},clip, width=\linewidth]{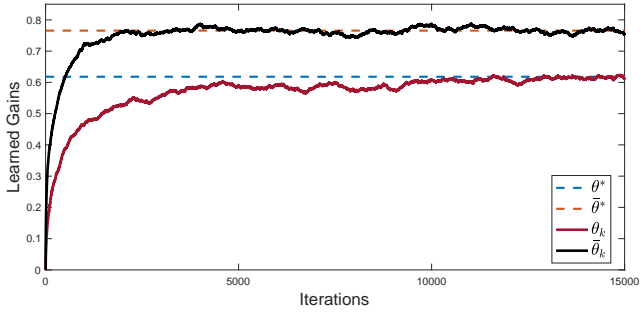}
\caption{The learning process in Example 1. }\label{figure}
\end{figure}
 The simulation results are provided in Figure~\ref{figure}, which gives the evolution of learning gains obtained by Algorithm~\ref{alg.RL-LQ} along with  the optimal gains. The figure shows that the learned strategy reaches a sufficiently small neighbourhood of the optimal one obtained by Theorem~\ref{thm:LQ_DSS},  after a few thousand iterations.

\section{Conclusions}\label{sec:conclusions}
In this paper,  the application of reinforcement learning  algorithms  in deep structured teams was studied for
 Markov chain and linear quadratic  models  with  discounted  and time-average cost functions. Two non-classical information structures were considered, namely, deep state sharing and no sharing.  Different  planning and reinforcement leaning algorithms  were proposed.  In particular,   it was shown that  the solution of a (model-free) Q-learning algorithm and  a (model-free) policy gradient algorithm converge  to the optimal solution of the Markov chain model  and linear quadratic model, respectively.   Finally, the obtained results were applied to smart grid in a simulation environment.

%

%
%


%

\bibliographystyle{IEEEtran}
\bibliography{Jalal_Ref}
\end{document}